\theoremstyle{theorem}
\newtheorem{thm}{Theorem}[section]
\newtheorem{thrm}{Theorem}
\newtheorem{prop}[thrm]{Proposition}
\newtheorem{Def}[thrm]{Definition}
\newcommand{\E}{\mathop{\mathbb{E}}}
\newcommand{\R}{\mathbb{R}}
\newcommand{\cP}{\mathcal{P}}
\newcommand{\cW}{\mathcal{W}}
\newcommand{\argmax}{\mathrm{argmax}}
\newcommand{\dom}{{\rm dom}}
\newcommand{\G}{{\rm Greedy}}
\newcommand{\fps}{f^{+}}
\newcommand{\fa}{\sigma}
\newcommand{\OPT}{\mathrm{OPT}}
\newcommand{\Var}{\mathrm{Var}}
\newcommand{\Evt}{\mathcal{E}}
\newcommand{\aL}{\mathcal{L}}
\newcommand{\bx}{\boldsymbol{x}}
\newcommand{\bI}{\boldsymbol{I}}
\newcommand{\mF}{\mathcal{F}}
\title{On Adaptivity Gaps of Influence Maximization 
	under the Independent Cascade Model with Full Adoption Feedback}
\author[1]{Wei Chen\thanks{Email address: \url{weic@microsoft.com}.}\thanks{Supported in part by the National Natural Science Foundation of China (Grant No. 61433014).}}
\author[2]{Binghui Peng\thanks{Email address: \url{pbh15@mails.tsinghua.edu.cn}.}}
\affil[1]{Microsoft Research}
\affil[2]{Tsinghua University}
\date{\vspace{-1cm}}
\begin{document}
	
	\maketitle
	
	\begin{abstract}
	In this paper, we study the {\em adaptivity gap} of the influence maximization problem under independent cascade model when {\em full-adoption} feedback is available. 
	Our main results are to derive upper bounds on several families of well-studied influence graphs, including in-arborescences, out-arborescences and bipartite graphs.
	Especially, we prove that the adaptivity gap for the in-arborescence is between $[\frac{e}{e-1}, \frac{2e}{e - 1}]$ and for the out-arborescence, the gap is between $[\frac{e}{e-1}, 2]$. 
	These are the first constant upper bounds in the full-adoption feedback model.
	We provide several novel ideas to tackle with correlated feedback appearing in the adaptive stochastic optimization, which we believe to be of independent interests.
\end{abstract}

\section{Introduction}
\label{sec:introduction}
Following the celebrated seminal work of Kempe et al.~\cite{kempe03journal}, the {\em influence maximization} (IM) problem has been extensively studied over past decades. 
Influence maximization is the problem of selecting at most $k$ {\em seed nodes} that maximizes the influence spread on a given social network and diffusion model.
It provides mathematical models for information diffusions and has numerous real world applications, such as viral markets, rumor controls, etc. 
In the past years, the IM problem has been studied under different context such as outbreak detection\cite{Leskovec2007celf}, topic-aware influence propagation~\cite{barbieri2012topic}, 
competitive and complementary influence maximization \cite{lu2015competition} etc., and both theoretical and practical efficient algorithms have been developed\cite{Chen2009efficient,chen2010sharpphard,borgs2014rrset,tang2014newrrset,tang2015influence}. 
See the recent survey~\cite{chen2013information,LiFWT18} for more detailed reference.

Meanwhile, stimulating by the real life demand, researchers in recent years begin to consider this classical problem in the adaptive setting. 
In the {\em adaptive influence maximization} problem, instead of consuming all budgets and selecting the seed set all at once, we are allowed to select seeds one after another, making future decisions based on the
propagation feedback gathered from the previous seeds.
Two feedback models are typically considered~\cite{GoloKrause11}: {\em myopic feedback}, where only the 
one-step propagation from the selected seed to its immediate out-neighbors are included in the feedback, and
{\em full-adoption feedback}, where the entire cascade from the seed is included in the feedback.
This adaptive decision process can potential bring huge benefits but it also brings technical challenges, since adaptive policies are usually hard to design and analyze, and the adaptive decision process can be slow in practice.
Thus, a crucial task in this area is to decide whether and how much adaptive policy is really superior over
the non-adaptive policy. 
The {\em adaptivity gap} quantifies to what extent adaptive policy outperforms a non-adaptive one and it is defined as the supremum ratio between the optimal adaptive policy and  the optimal non-adaptive policy.
The above question has been answered recently when only {myopic feedback} are available~\cite{peng2019adaptive,pmlr-v97-fujii19a} and constant upper bounds on adaptivity gap have been derived.

In this paper, we consider the influence maximization problem in the independent cascade (IC) model with
{\em full-adoption feedback}.
Even though the full-adoption feedback under the IC model satisfies an important property called
{\em adaptive submodularity}, the analysis of its adaptivity gap is more challenging because
the feedback obtained from different seed nodes are no longer independent --- feedback from one seed
contains multiple-step cascade results, and thus it may already contain partial feedback from another
seed.
Therefore, results from existing studies on the 
adaptivity gap of general classes of stochastic adaptive optimization problems \cite{asadpour2015maximizing,gupta2016algorithms,gupta2017adaptivity,bradac2019near} cannot be applied,
since they all rely on independent feedback assumption.

In this study, we are able to 
derive nontrivial constant upper bounds on several families of graphs, including in-arborescences, out-arborescence and bipartite graphs, which have been 
the targets of many studies in influence maximization (see Section~\ref{sec:related-work} for more details). 
Formally, we have 
(i) when the influence graph is an in-arborescence, the adaptivity gap is between $[\frac{e}{e - 1}, \frac{2e}{e - 1}]$ (Section~\ref{sec:in-arborescence} and Section~\ref{sec:adaptivity-gap-lower}), 
(ii) when the influence graph is an out-arborescence, the adaptivity gap is between $[\frac{e}{e - 1}, 2]$ (Section~\ref{sec:adaptivity-gap-out-arborescence} and Section~\ref{sec:adaptivity-gap-lower}) and 
(iii) the adaptivity gap for the bipartite influence graph is $\frac{e}{e - 1}$ (Section~\ref{sec:adaptivity-gap-bipartite}). 
Our upper bounds on arborescences are the first constant upper bounds in the full-adoption feedback model and our upper bound on bipartite graphs improves the results in~\cite{pmlr-v97-fujii19a, hatano2016adaptive}.

The main technical contributions in this paper are on the adaptivity gaps for arborescences, in which the feedback information can be correlated and all previous methods failed. We adopt two different proof strategies to overcome the difficulty of dependent feedback. For in-arborescences, we follow the framework in~\cite{asadpour2015maximizing} and construct a Poisson process to relate the influence spread of the optimal adaptive policy and the {\em multilinear extension}. The analyses are non-trivial due to the correlated feedback. We need to delicately decompose the marginal gain of the Poisson process and give upper bounds on each terms. The key observation we have for in-arborescences is that the {\em boundary} of the active nodes shrinks during the diffusion process. For out-arborescences, we again relate the influence spread of the multilinear extension to the optimal policy, but using a completely different proof strategy. The key observation for out-arborescences is that the predecessors of each node form a directed line thus proving a stronger results on this line is sufficient. We derive a family of constraints on the optimal adaptive policy and telescope the marginal gains of the multilinear extension, combining these two could yield our results.

Due to the space constraint, detailed proofs and some additional materials are moved into the appendix.


\vspace{-2mm}
\subsection{Related Work}
\label{sec:related-work}

A number of studies \cite{bharathi2007competitive, wang2016bharathi, lu2017solution, lin2017boosting} have focused on the influence maximization problem on arborescences and interesting theoretical results have been found with this special structural assumption. 
Bharathi et al. \cite{bharathi2007competitive} initiate the study on arborescences and derive polynomial-time approximation scheme (PTAS) for bidirected trees. 
For in-arborescences, Wang et al. \cite{wang2016bharathi} give a polynomial time algorithm in the linear threshold (LT) model and Lu et al. \cite{lu2017solution} prove NP hardness results under the independent cascade model. 

The influence maximization problem on one-directional bipartite graphs has been studied by~\cite{alon2012optimizing, soma2014optimal, hatano2016adaptive}, and it has applications on advertisement selections. Especially, Hatano et al. \cite{hatano2016adaptive} consider the problem in the adaptive setting and derive non-adaptive algorithms with theoretical guarantees. 

Initiated by the pioneering work of \cite{GoloKrause11}, a recent line of work~\cite{tong2017adaptive,yuan2017adaptive,salha2018adaptive,sun2018multi, pmlr-v97-fujii19a,peng2019adaptive} focus on the adaptive influence maximization problem and develop both theoretical results and practical methods. Golovn and Krause \cite{GoloKrause11} propose the novel concept of adaptive submodularity and applied it to the adaptive influence maximization problem. 
They prove that with full-adoption feedback in the IC model, the influence spread function satisfies the adaptive submodularity, thus a simple adaptive greedy algorithm could achieve the $(1 - 1/e)$ approximation ratio. Fujii et al. \cite{pmlr-v97-fujii19a} generalize the notion and propose weakly adaptive submodularity. They consider the adaptivity gap on both LT and IC models, when the influence graph is bipartite. While they prove a tight upper bound of 2 for the LT model, their bound for IC model depends on the structure of the graph and can be far worse than $1 - 1/e$. 
In contrast, in this paper we provide the tight bound of $1-1/e$ with a simple analysis in this case.
Recently, Peng and Chen \cite{peng2019adaptive} consider myopic feedback model and prove an upper bound of 4
for the adaptivity gap. 
Singer and his collaborators have done a series of studies on adaptive seeding and studied the adaptivity
gap in their setting \cite{seeman2013adaptive,singer2016influence,badanidiyuru2016locally}, but their model is a two-step adaptive model with the first step purely for referring
to the seed candidates, and thus their model is very different from adaptive influence maximization
of this paper and other related work above.

From theoretical side, there are two lines of works \cite{asadpour2015maximizing, adamczyk2016submodular, gupta2016algorithms, gupta2017adaptivity, bradac2019near} on the adaptivity gap that are most relevant to ours. Asadpour et al. \cite{asadpour2015maximizing} study the stochastic submodular optimization problem. 
They use multilinear extensions to transform an adaptive strategy to a non-adaptive strategy and give a tight upper bound of $\frac{e}{e - 1}$. 
Their methods inspire our work but they cannot be directly applied to our settings, since the feedback information are not independent in the full-adoption feedback model. We defer further discussion about the difference to Section~\ref{sec:in-arborescence}.
Another line of work \cite{gupta2016algorithms,gupta2017adaptivity,bradac2019near} focus on the stochastic probing problem. 
They transform any adaptive policy to a random walk non-adaptive policy and Bradac et al.~\cite{bradac2019near} finally prove a tight upper bound of 2 for prefix constraints. 
	\vspace{-2mm}
\section{Preliminaries}
\label{sec:preliminarly}

In this paper, we focus on the well known {\em independent cascade} (IC) model as the diffusion model. In the IC model, the social network is described by a directed influence graph $G = (V, E, p)$ ($|V| = n$), and there is a probability $p_{uv}$ associated with each edge $(u, v) \in E$. 
The \emph{live-edge} graph $L = (V, L(E))$ is a random subgraph of the influence graph $G$, 
	where each edge $(u, v)\in E$ appears in $L(E)$ independently with probability $p_{uv}$. If the edge appears in $L(E)$, we say it is \emph{live}, otherwise we say it is \emph{blocked}. We use $\aL$ to denote all possible live-edge graphs and $\cP$ to denote the probability distribution over $\aL$. The diffusion process can be described by the following discrete time process. At time $t = 0$, a seed set $S \subseteq V$ is activated and a live-edge graph $L$ is sampled from the probability distribution $\cP$ (i.e., each edge $(u, v)$ will be live with probability $p_{uv}$). At time $t = 1, 2, \cdots$, a node $u \in V$ is active if (i) $u$ is active at time $t - 1$ or (ii) one of $u$'s in-neighbor is active at time $t - 1$. The diffusion process will end when there are no new nodes been activated. We use $\Gamma(S, L)$ to denote the set of active nodes at the end of diffusion, or equivalently, the set of nodes reachable from set $S$ under live-edge graph $L$.  We define the influence reach function $f:\{0,1\}^{V} \times \aL \rightarrow \R^{+}$ as $f(S, L) := |\Gamma(S, L)|$. 
	Then the influence spread of a set $S$, denoted as $\fa(S)$, is defined as the expected number of active nodes at the end of the diffusion process, i.e., $\fa(S):= \E_{L\sim \cP}[f(S, L)]$. 

We formally state the (non-adaptive) influence maximization problem as follow.
\vspace{-2mm}
\begin{Def}[Non-adaptive influence maximization]
	\label{def:non-adaptive-im}
	The non-adaptive influence maximization (IM) problem is the problem of given an influence graph $G=(L, V, p)$ and a budget $k$, finding a seed set $S^{\star}$ of size at most $k$ that maximizes the influence spread, i.e., find $S^{\star} = \argmax_{S\subseteq V, |S| \leq k} \fa(S)$.
\end{Def}

In the adaptive setting, instead of committing the entire seed set all at once, we are allowed to select the seed node one by one. After we seed a node, we can get some feedback about the diffusion state from the node. Formally, a {\em realization} $\phi$ is a function $\phi: V \rightarrow O$, mapping a node $u$ to its state, i.e., the feedback we obtain when we select the node $u$ as a seed. 
The realization $\phi$ determines the status of all edges in the influence graph and it is one-to-one correspondence to a live-edge graph. Henceforth, in the rest of the paper, we would use $\phi$ to refer to both the realization and the live-edge graph interchangeably.  The feedback information depends on the feedback model and in this paper, we consider the {\em full-adoption} feedback model. In the full-adoption feedback model, after we select a node $u$, we get to see the status of all out-going edges of nodes $v$ that are reachable from $u$ in the live-edge graph. 
In another word, we get to see the full cascade of the node $u$. At each step of the adaptive seeding process, our observation so far is represented by a {\em partial realization} $\psi \subseteq V\times O$,
which is a collection of nodes and states, $(u, \phi(u))$, we have observed so far. We use $\dom(\psi)$ to denote the set $\{u : u \in V, \exists (u, o) \in \psi\}$, that is, all nodes we have selected so far. 
For two partial realizations $\psi$ and $\psi'$, we say $\psi$ is a sub-realization of 
	$\psi'$ if $\psi \subseteq \psi'$  when treating $\psi$ and
	$\psi'$ as subsets of $V\times O$.

An adaptive policy $\pi$ is a mapping from partial realizations to nodes. Given a partial realization $\psi$, we use $\pi(\psi)$ to represent the next seed selected by $\pi$. After selecting node $\pi(\psi)$, our observation (partial realization) grows as $\psi' = \psi \cup (\pi(\psi), \phi(\pi(\psi)))$ and the policy $\pi$ would pick the next node based on the new partial realization $\psi'$. Given a realization $\phi$, we use $V(\pi, \phi)$ to denote the seed set selected by the policy $\pi$. 
The {\em adaptive influence spread} of the policy $\pi$ is defined as the expected number of active nodes under the policy $\pi$, i.e., $\fa(\pi) := \E_{\Phi\sim \cP}[f(V(\pi, \Phi), \Phi)]$. We define $\Pi(k)$ as the set of policies $\pi$, such that for any possible realization $\phi$, $|V(\pi, \phi)| \leq k$. The adaptive influence maximization problem is formally stated as follow.
\begin{Def}[Adaptive influence maximization]
	\label{adaptive-influence-maxmization}
	The adaptive influence maximization (AIM) problem is the problem of given an influence graph $G=(L, V, p)$ and a budget $k$, finding a feasible policy $\pi \in \Pi(k)$ that maximizes the adaptive influence spread, i.e., find $\pi^{\star} = \argmax_{\pi \in \Pi(k)} \fa(\pi)$.
\end{Def}

In this paper, we study the {\em adaptivity gap} of the influence maximization problem under full-adoption feedback model. The adaptivity gap measures the supremacy of the optimal adaptive policy over the optimal non-adaptive policy.
We use $\OPT_{N}(G, k)$ (resp. $\OPT_{A}(G, k)$) to denote the influence spread of the optimal non-adaptive (resp. adaptive) policy for the IM problem on the influence graph $G$ with a budget $k$. 
\begin{Def}[Adaptivity gap]
	\label{def:adaptivity-gap}
The adaptivity gap for the IM problem in the IC model with full-adoption feedback is defined as
	\begin{align}
	\label{eq:adaptivity-gap-def}
	\sup_{G, k}\frac{\OPT_{A}(G, k)}{\OPT_{N}(G, k)}.
	\end{align}
\end{Def}


We prove constant upper bounds on the adaptivity gap for several classes of graphs, including the {\em in-arborescence} and the {\em out-arborescence}.

\begin{Def}[In-arborescence]
	\label{def:in-arborescence}
	We say an influence graph $G=(V, E, p)$ is an in-arborescence when the underline graph is a directed tree with a root $u$, such that for any node $v\in V$, the unique path between nodes $u$ and $v$ is directed from $v$ to $u$. In other words, the information propagates from leaves to the root.
\end{Def}
\begin{Def}[Out-arborescence]
	\label{def:out-arborescence}
	We say an influence graph $G=(V, E, p)$ is an out-arborescence when the underline graph is a directed tree with a root $u$, such that for any node $v\in V$, the unique path between nodes $u$ and $v$ is directed from $u$ to $v$. In other words, the information propagates from the root to leaves.
\end{Def}
A set function $f: V \rightarrow \R^{+}$ is said to be submodular if for any set $A \subseteq B \subseteq V$ and any element $u \in V\backslash B$, $\Delta_f(u | A):= f(A \cup \{u\}) - f(A) \geq f(B \cup\{u\}) - f(B) = \Delta_f(u | B)$. We call $\Delta_f(u | A)$ the marginal gain for adding element $u$ to the set $A$.
Moreover, the function $f$ is said to be monotone if $f(B)\geq f(A)$.  
Under the IC model, the influence spread function $\fa(\cdot)$ is proved to be {\em submodular} and {\em monotone} \cite{Kempe2003maximizing}, thus given value oracles for $\fa(\cdot)$, the greedy algorithm is $1 - 1/e$ approximate to the optimal non-adaptive solution. 

In the adaptive submodular optimization scenario, a similar notion corresponds to the submodularity is called the {\em adaptive submodularity}. For a function $f$ and a partial realization $\psi$,  we write $\Phi \sim \psi$ to say that the realization $\Phi$ is {\em consistent} with the partial realization $\psi$, i.e., $\Phi(u) = \psi(u)$ for any $u \in \dom(\psi)$, then the conditional marginal gain for an element $u \notin \dom(\psi)$ is defined as 
$\Delta_{f, \cP}(u | \psi):=\E_{\Phi\sim \cP}\left[f\left(\dom(\psi)\cup \{u\}\right) - f\left(\dom(\psi)\right) | \Phi \sim \psi \right],$. A function $f$ is said to be adaptive submodular with respect to $\cP$ if for any partial realizations $\psi \subseteq \psi'$ and any element $u \in V\backslash \dom(\psi')$, $\Delta_{f, \cP}(u | \psi) \geq \Delta_{f, \cP}(u | \psi')$. Moreover, the function $f$ is {\em adaptive monotone} with respect to $\cP$ if $\Delta_{f, \cP}(u | \psi)\geq 0$ for any feasible partial realization $\psi$. Golovin and Krause~\cite{GoloKrause11} shows the following important result, which will
	be used in our analysis.
\begin{prop}[\cite{GoloKrause11}]
\label{prop:infAdaptiveSubmodular}
Influence reach function $f$ is adaptive submodular and adaptive monotone 
	with respect to the live-edge graph distribution $\cP$ 
	under the independent cascade model with full-adoption feedback.
\end{prop}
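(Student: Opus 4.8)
The plan is to reduce the conditional marginal gain $\Delta_{f,\cP}(u\vbar\psi)$ to an expected ``residual reachability'' quantity and then compare these quantities across nested partial realizations by a coupling argument.

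Adaptive monotonicity is the easy half. For every live-edge graph $\phi\sim\psi$ we have $\Gamma(\dom(\psi),\phi)\subseteq\Gamma(\dom(\psi)\cup\{u\},\phi)$, so $f(\dom(\psi)\cup\{u\},\phi)-f(\dom(\psi),\phi)\ge 0$ holds pointwise; taking the conditional expectation over $\Phi\sim\psi$ yields $\Delta_{f,\cP}(u\vbar\psi)\ge 0$.

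For adaptive submodularity, the first step is a structural lemma describing a full-adoption partial realization $\psi$. Let $R_\psi:=\Gamma(\dom(\psi),\phi)$ for $\phi\sim\psi$. I would show that (i) $R_\psi$ does not depend on which $\phi\sim\psi$ is chosen, since full-adoption feedback reveals exactly the status of every edge whose tail lies in $R_\psi$, and $R_\psi$ is reachability-closed (every live edge out of $R_\psi$ has its head in $R_\psi$); and (ii) the event $\{\Phi\sim\psi\}$ is measurable with respect to these revealed edges alone, so, because all edges are independent in the IC model, conditioned on $\Phi\sim\psi$ the unrevealed edges---those with tail outside $R_\psi$---remain mutually independent, each edge $(v,w)$ being live with its original probability $p_{vw}$. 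This self-consistency (the set of revealed edges is itself determined by those very edges' statuses) is the most delicate point and the main obstacle: it is exactly where the correlation structure of full-adoption feedback has to be controlled, and it is what distinguishes this model from the independent-feedback settings.

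Granting the structural lemma, the second step rewrites the marginal gain. Because $R_\psi$ is reachability-closed, for $u\notin R_\psi$ every node of $\Gamma(\dom(\psi)\cup\{u\},\phi)\setminus R_\psi=\Gamma(u,\phi)\setminus R_\psi$ is reached from $u$ along a path that never enters $R_\psi$, hence uses only unrevealed edges; and for $u\in R_\psi$ the gain is $0$. Thus $\Delta_{f,\cP}(u\vbar\psi)$ equals the expected number of nodes reachable from $u$ in the subgraph induced on $V\setminus R_\psi$, where the unrevealed edges are sampled from the original product distribution. The final step compares $\Delta_{f,\cP}(u\vbar\psi)$ with $\Delta_{f,\cP}(u\vbar\psi')$ for $\psi\subseteq\psi'$. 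One first checks $R_\psi\subseteq R_{\psi'}$; if $u\in R_{\psi'}$ then the right-hand gain is $0\le\Delta_{f,\cP}(u\vbar\psi)$ by monotonicity. Otherwise I would couple the two residual computations by drawing a single sample of the edges with tail outside $R_{\psi'}$ (shared by both problems) and independently sampling the extra unrevealed edges needed for the $\psi$-problem. Since passing from $\psi$ to $\psi'$ enlarges the forbidden set from $R_\psi$ to $R_{\psi'}$ and deletes the edges with tail in $R_{\psi'}\setminus R_\psi$, both effects can only shrink the reachable set, so the $\psi'$-reachability is pointwise at most the $\psi$-reachability under this coupling; taking expectations gives $\Delta_{f,\cP}(u\vbar\psi)\ge\Delta_{f,\cP}(u\vbar\psi')$, which is the desired adaptive submodularity.
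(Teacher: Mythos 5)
The paper itself contains no proof of Proposition~\ref{prop:infAdaptiveSubmodular}: it is quoted directly from Golovin and Krause~\cite{GoloKrause11} and used as a black box (its only role here is to justify the inequality $\Delta_f(i\mid \Psi_i)\le \Delta_f(i\mid\psi)$ inside the proof of Lemma~\ref{lem:possion-process-vs-adapt-opt}), so the only meaningful comparison is with the proof in the cited reference. Your argument is correct and is essentially that proof: (a) the event $\{\Phi\sim\psi\}$ is a cylinder event on the edges whose tail lies in $\Gamma(\psi)$ --- once those edges carry the recorded statuses, reachability from $\dom(\psi)$ is determined without inspecting any other edge --- so by edge-independence the unrevealed edges remain independent with their original probabilities conditioned on $\psi$; (b) the conditional marginal gain of $u$ then equals its expected reachability in the residual graph on $V\setminus\Gamma(\psi)$; (c) a coupling that shares the edges with tail outside $\Gamma(\psi')$ gives pointwise containment of the $\psi'$-residual reachable set in the $\psi$-residual reachable set, hence $\Delta_{f,\cP}(u\mid\psi)\ge\Delta_{f,\cP}(u\mid\psi')$. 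You also correctly isolate the genuinely delicate step (the ``self-consistency'' that the set of revealed edges is determined by those very edges' statuses), which is precisely what fails to hold in independent-feedback frameworks and why this proposition needs its own proof; your resolution of it is sound.
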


\vspace{-2mm}
The following two definitions are very important to our later analysis.

\begin{Def}[Multilinear extension]
	\label{definition:multilinear-extension}
	The multilinear extension $F: [0,1]^{V} \rightarrow \R^{+}$ of the influence spread function $\fa$ is defined as
	\begin{align}
	\label{eq:multilinear-extension-def}
	F(x_1, \cdots, x_n) = \sum_{S \subseteq V}\left[ \left(\prod_{i \in S}x_i\prod_{i \notin S}\left(1 - x_i\right)\right) \fa(S)\right].
	\end{align}
\end{Def}
We remark that the multilinear extension $F(\cdot)$ is monotone and DR-submodular \cite{kempe03journal}, when the original function $\fa(\cdot)$ is monotone and submodular. 
A vector function $f$ is DR-submodular if for any two vectors $(x_1, \ldots, x_n) \le (y_1, \ldots, y_n)$
	(coordinate-wise), for any $\delta > 0$, any $j \in [n]$, 
	$f(x_1, \ldots, x_j + \delta, \ldots, x_n) - f(x_1, \ldots, x_n) \ge 
	f(y_1, \ldots, y_j + \delta, \ldots, y_n) - f(y_1, \ldots, y_n)$.
For any configuration $(x_1, \cdots, x_n)$, we use $\fps(x_1, \cdots, x_n)$ to denote the optimal adaptive strategy under this configuration. Formally,

\begin{Def}[Adaptive influence spread function based on an optimal adaptive policy]
	\label{definition:optimal-adaptive-strategy}
	We define $\fps: [0,1]^{V} \rightarrow \R^{+}$ as:
	\begin{align}
	\label{eq:optimal-adaptive-strategy-def}
	\fps(x_1, \cdots, x_n) = \sup_{\pi}\left\{\fa(\pi) : \Pr_{\Phi\sim\cP}\left[i \in V(\pi, \Phi)\right] = x_i, \, \forall i \in [n] \right\}.
	\end{align}
\end{Def}

\section{Adaptivity Gap for In-arborescence}
\vspace{-2mm}
\label{sec:in-arborescence}
In this section, we give an upper bound on the adaptivity gap when the influence graph is an in-arborescence,
	as stated in the following theorem. 
\begin{thm}
	\label{thm:adaptivity-gap-in-arborescence}
	When the underline influence graph is an in-arborescence, the adaptivity gap for the IM problem in the IC model with full adoption feedback is at most $\frac{2e}{e - 1}$.
\end{thm}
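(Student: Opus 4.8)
The plan is to follow the multilinear-extension framework of Asadpour et al.~\cite{asadpour2015maximizing}: build a fractional non-adaptive solution by a continuous-time (Poisson) insertion process whose multilinear value can be charged against the optimal adaptive spread, and then round it. Let $\pi^{\star}$ be an optimal adaptive policy, so $\OPT_{A}(G,k)=\fa(\pi^{\star})$, and let $x_i^{\star}=\Pr_{\Phi\sim\cP}[\,i\in V(\pi^{\star},\Phi)\,]$ be its marginal selection probabilities, collected in $\bx^{\star}$. Since $|V(\pi^{\star},\phi)|\le k$ for every realization, $\sum_i x_i^{\star}\le k$, so $\bx^{\star}$ lies in the cardinality polytope. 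The target is to produce a non-adaptive seed set $S$ with $|S|\le k$ and
\begin{align}
\fa(S)\ \ge\ \Big(1-\tfrac1e\Big)\cdot\tfrac12\,\fa(\pi^{\star}),
\end{align}
so that $\OPT_{N}(G,k)\ge\frac{e-1}{2e}\OPT_{A}(G,k)$, which is exactly the claimed bound $\frac{2e}{e-1}$. The factor $1-1/e$ will come from the continuous-greedy nature of the process and the factor $\tfrac12$ from handling the correlated feedback.

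For the construction, introduce for each $t\ge 0$ a random set $R(t)$ that contains each node $i$ independently with probability $y_i(t):=1-e^{-x_i^{\star}t}$ (node $i$ is inserted at the first arrival of a rate-$x_i^{\star}$ Poisson clock), and draw a single live-edge graph $\Phi\sim\cP$. Writing $g(t):=\E[f(R(t),\Phi)]=F(\boldsymbol{y}(t))$, we have $g(0)=0$ and
\begin{align}
g'(t)=\sum_{i}x_i^{\star}e^{-x_i^{\star}t}\,\E\!\left[\,f(R(t)\cup\{i\},\Phi)-f(R(t),\Phi)\,\right].
\end{align}
The heart of the argument is a differential inequality of the form $g'(t)\ge\tfrac12\fa(\pi^{\star})-g(t)$, obtained by lower-bounding the summed marginal gains of the insertion process against the total value of $\pi^{\star}$; adaptive submodularity (Proposition~\ref{prop:infAdaptiveSubmodular}) is what lets me assert that seeding on the comparatively small random set $R(t)$ is at least as productive as seeding along the larger partial realizations visited by $\pi^{\star}$. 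Integrating this inequality gives $g(1)\ge(1-1/e)\cdot\tfrac12\,\fa(\pi^{\star})$; since $\boldsymbol{y}(1)\le\bx^{\star}$ coordinatewise the point $\boldsymbol{y}(1)$ is feasible, and pipage/swap rounding turns it into an integral $S$ with $\fa(S)=F(\mathbf{1}_S)\ge F(\boldsymbol{y}(1))=g(1)$, which yields the target.

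The main obstacle, and the source of the extra factor $\tfrac12$, is the correlated feedback: in an in-arborescence the cascade from a seed runs up its unique path to the root, so the nodes already in $R(t)$ reveal the status of many edges that also control the marginal gain of a new seed $i$, and the clean independent-case comparison $g'(t)\ge\fa(\pi^{\star})-g(t)$ breaks down. To recover a usable bound I would decompose the reach of a newly inserted seed along its root-path into the portion lying strictly below the current active frontier and the portion lying above it, and bound the two contributions separately. The key structural fact of the in-arborescence enters here: because every node has a single out-edge toward the root, the active region is a union of root-directed paths whose leading nodes---the \emph{boundary}---can only merge, never proliferate, as the diffusion advances, so the boundary shrinks monotonically in time. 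This monotonicity is what lets me charge the gain attributable to the already-revealed, correlated edges against gain that is already counted, incurring at most a factor of two. Executing this decomposition, summing over $i$, and verifying the resulting differential inequality is where essentially all of the technical difficulty resides.
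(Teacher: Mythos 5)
Your high-level framework (Poisson insertion process, comparison against $\fps$ via adaptive submodularity, pipage rounding) is exactly the paper's, but the heart of your proof---the differential inequality $g'(t)\ge\tfrac12\fa(\pi^{\star})-g(t)$---is only asserted, and the mechanism you sketch for proving it does not work. When you condition on the partial realization $\psi$ revealed by time $t$ and compare the marginal gain of the process against $\fps$, the loss term that adaptive submodularity forces on you is $\fa(\Gamma(\psi))$, the \emph{fresh} influence spread of the already-active set, which splits as $\fa(\Gamma(\psi))\le|\Gamma(\psi)|+\fa(\partial(\psi))$. The first term is indeed ``gain already counted'' and yields the $-g(t)$; but the second term, the future spread of the boundary nodes, cannot be charged against already-counted gain with any constant factor: take a seed whose out-edge is revealed blocked but which sits just below a long path of probability-one edges---then $|\Gamma(\psi)|=1$ while $\fa(\partial(\psi))$ is of order $n$. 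So your claim that boundary monotonicity lets you absorb the correlated-feedback loss into already-accumulated gain ``incurring at most a factor of two'' is false as a pointwise statement, and summing over $i$ will not rescue it.

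What the paper actually does at this point is structurally different, and both of its ingredients are absent from your proposal. First, it bounds $\fa(\partial(\psi))\le\OPT_N(G,|\dom(\psi)|)$ using the in-arborescence fact $|\partial(\psi)|\le|\dom(\psi)|$; the loss is thus measured against the \emph{non-adaptive optimum}, not against $g(t)$ or $\fps$. Second, because the Poisson process selects a random number of seeds that can exceed $k$, it needs a separate ``weak concavity'' lemma, $\E[\OPT_N(G,X)]\le\frac{e}{e-1}\OPT_N(G,\E[X])$, to control the expectation of this term by $\frac{e}{e-1}\OPT_N(G,k)$. The resulting inequality is
\begin{align*}
g'(t)\ \ge\ \fps(x_1,\cdots,x_n)-g(t)-\tfrac{e}{e-1}\OPT_N(G,k),
\end{align*}
and the factor $2$ in $\frac{2e}{e-1}$ appears only at the very end, by rearranging $\OPT_N(G,k)\ge(1-1/e)\fps-\OPT_N(G,k)$. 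This distinction is not cosmetic: on the tight directed-line instance the differential inequality above is nearly vacuous (the loss term is comparable to $\fps$ itself), and the entire bound comes from the final rearrangement involving $\OPT_N$. Consequently your stronger pointwise inequality $g'(t)\ge\tfrac12\fps-g(t)$ cannot be established by any charging argument that is local in $t$ and never references $\OPT_N(G,k)$; to complete the proof you would need to import the boundary-to-$\OPT_N$ bound and the weak-concavity step, i.e., essentially the paper's missing lemmas.
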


Our approach follows the general framework of \cite{asadpour2015maximizing}, i.e., we use the multilinear extension to transform an adaptive policy to a non-adaptive policy, and construct a Poisson process to connect the influence spread of the non-adaptive policy to the adaptive policy. 
Once we have done this, combining with the rounding procedure in \cite{calinescu2011maximizing, chekuri2010dependent}, we can derive an upper bound on the adaptive policy.
However, remembering that the main difficulty of our problem comes from the correlation of the feedback, directly applying the analyses in \cite{asadpour2015maximizing} does not work. 
Our methods have several key differences comparing to \cite{asadpour2015maximizing}.
To be more specific, we can no longer directly relate the dynamic marginal gain of the Poisson process to the influence spread of the adaptive policy. 
Instead, we need to delicately decompose the marginal gain into two parts (see Lemma \ref{lem:possion-process-vs-adapt-opt} and Lemma \ref{lem:two-hop-influence-upper-bound}). 
The first part can be related to the optimal adaptive strategy via the adaptive submodularity, while the second part can be related to a (randomize) non-adaptive policy. 
However, this non-adaptive policy is not guaranteed to be bounded by $\OPT_{N}(G, k)$ (the optimal non-adaptive policy of budget $k$), 
because the size of the seed set is random and can potentially be very large. 
We utilize the {\em``weak concavity''} of the optimal solution to show that it is enough to consider the expected size of the (random) seed set,
and then we give an upper bound on this expected size for an in-arborescence. 
This upper bound relies on a crucial property of the in-arborescence, i.e., the {\em boundary} (see Definition~\ref{definition:boundary-partial-realization}) size always shrinks during the diffusion process of the information (see Lemma~\ref{lem:in-arborescence-boundary}).
Putting things together, we get a differential inequation that relates the dynamic marginal gain of the Poision process to both an optimal adaptive policy and the optimal non-adaptive policy. Solving the differential inequation yields a lower bound on the multilinear extension and it gives an upper bound on the adaptivity gap. 
We remark that one noticeable difference of our bound on the adaptivity gap is that it does not hold for the matroid constraint (which holds in \cite{asadpour2015maximizing}), even though the multilinear extension was original designated to handle matroid constraints.

Following the work \cite{asadpour2015maximizing, vondrak2007submodularity}, for any configuration $(x_1, \cdots, x_n)$, we consider the following Poisson process, which will indirectly relate the multilinear extension $F(x_1, \cdots, x_n)$ to the optimal adaptive strategy $\fps(x_1, \cdots, x_n)$. 

\noindent{\bf Poisson Process.\ \ }
There are $n$ independent Poisson clocks $C_1, \cdots, C_n$, the clock $C_i$ ($i \in [n]$) sends signals with rate $x_i$. Whenever a clock $C_i$ sends out a signal, we select node $i$ as a seed and gather feedback $\phi(i)$ according to the underline realization $\phi$. We use $\Psi(t)$ to denote the partial realization at time $t$ and we start with $\Psi(0)$ as $\emptyset$. 
We note that $\Psi(t)$ is a random partial realization that contains
	(a) random time points $t_i \le t$ at which clock $C_i$ sends a signal; and
	(b) for each $t_i \le t$, the feedback $\phi(i)$ of seed node $i$ based on the corresponding 
	live-edge graph. 
The Poisson process end at $t = 1$.
Note that the Poisson process is parameterized by $(x_1, x_2, \ldots, x_n)$, but we ignore these parameters
	in the notation $\Psi(t)$.

With a slight abuse of notation, we define $\Gamma(\psi)$ as the set of nodes reachable from $\dom(\psi)$ under partial realization $\psi$ and we define $f(\psi) = |\Gamma(\psi)|$. Notice that in the full-adoption feedback model, a partial realization could already determine all nodes reachable from the seed set.
The following lemma states that at the end of the Poisson process, i.e., when $t = 1$, the expected influence spread of $\E[f(\Psi(1))]$ is no greater than the influence spread of $F(x_1, \cdots, x_n)$.
\vspace{-2mm}
\begin{restatable}{lem}{lemPoissonProcessMultilinearExtension}
\label{lem:possion-process-vs-multilinear-extension}
$\E\left[f\left(\Psi(1\right))\right] = F\left(1 - e^{-x_1}, \cdots, 1 - e^{-x_n}\right) \leq F(x_1, \cdots, x_n)$.
\end{restatable}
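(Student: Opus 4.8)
The plan is to compute $\E[f(\Psi(1))]$ exactly by identifying the distribution of the seed set $\dom(\Psi(1))$, and then to deduce the inequality from the monotonicity of $F$ together with the elementary bound $1-e^{-x}\le x$. The first and most delicate step is to argue that the reachable set at time $t=1$ depends on the whole history of the Poisson process only through the \emph{unordered set} of nodes that have been seeded. Indeed, there is a single fixed underlying realization $\phi$, and under full-adoption feedback seeding a node reveals the entire cascade it generates in $\phi$; hence neither the firing times nor repeated firings of any clock affect reachability in the fixed live-edge graph. Making this precise, $\Gamma(\Psi(1)) = \Gamma(\dom(\Psi(1)),\phi)$ and therefore $f(\Psi(1)) = f(\dom(\Psi(1)),\phi)$ as a deterministic function of the pair $\bigl(\dom(\Psi(1)),\phi\bigr)$.

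Next I would determine the law of $S := \dom(\Psi(1))$. Since the clocks $C_1,\ldots,C_n$ are independent and $C_i$ is a rate-$x_i$ Poisson clock, the number of signals emitted by $C_i$ on $[0,1]$ is Poisson with mean $x_i$, so $C_i$ fires at least once with probability $1-e^{-x_i}$, independently across $i$. Thus $S$ is a random subset of $V$ in which each node $i$ is present independently with probability $1-e^{-x_i}$. Taking expectations in two stages — first conditioning on $S$ and averaging over $\phi$ to get $\E_\phi[f(S,\phi)] = \fa(S)$, then averaging over $S$ — yields
\begin{align*}
\E[f(\Psi(1))] \;=\; \E_{S}\bigl[\fa(S)\bigr] \;=\; \sum_{S \subseteq V}\Bigl(\prod_{i \in S}(1-e^{-x_i})\prod_{i \notin S} e^{-x_i}\Bigr)\fa(S) \;=\; F\bigl(1-e^{-x_1},\ldots,1-e^{-x_n}\bigr),
\end{align*}
which is exactly Definition~\ref{definition:multilinear-extension} evaluated at $y_i = 1-e^{-x_i}$, using $1-y_i = e^{-x_i}$. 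This establishes the claimed equality.

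For the inequality I would invoke $1-e^{-x_i}\le x_i$ for every $x_i\ge 0$, together with the fact that $1-e^{-x_i}\in[0,1]$ so that the argument remains in the domain of $F$; since $F$ is monotone (nondecreasing in each coordinate), it follows that $F\bigl(1-e^{-x_1},\ldots,1-e^{-x_n}\bigr)\le F(x_1,\ldots,x_n)$, completing the proof.

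The step I expect to demand the most care is the reduction in the first paragraph: justifying rigorously that $f(\Psi(1))$ is a deterministic function of $\bigl(\dom(\Psi(1)),\phi\bigr)$ and that the seeding events $\{i \in S\}$ are mutually independent across coordinates. This is precisely where the structure of full-adoption feedback — a single fixed live-edge graph whose reachable set from any seed set is completely determined — meets the independence of the Poisson clocks; once these two facts are in place, the remaining computation and the monotonicity argument are routine.
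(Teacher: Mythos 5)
Your proposal is correct and follows essentially the same route as the paper's proof: identify that each node $i$ is seeded independently with probability $1-e^{-x_i}$ at the end of the Poisson process, recognize the resulting expectation as $F\left(1-e^{-x_1},\ldots,1-e^{-x_n}\right)$, and conclude with monotonicity of $F$ together with $1-e^{-x}\le x$. The only difference is that you spell out explicitly why $f(\Psi(1))$ depends on the history only through the unordered seed set and the fixed realization, a point the paper's proof treats as implicit.
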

\vspace{-2mm}

We then give a lower bound on the dynamic marginal gain of the influence spread in the Poisson process.

\begin{restatable}{lem}{lemPoissonProcessAdaptiveOptimal}
\label{lem:possion-process-vs-adapt-opt}
For any $t \in [0, 1]$ and any fixed partial realization $\psi$, we have
\begin{align}
\label{eq:in-arborescence-eq1}
\E\left[\frac{d f\left(\Psi(t)\right)}{dt} \mid \Psi(t) = \psi\right] \geq \fps\left(x_1, \cdots, x_n\right) - \fa\left(\Gamma(\psi)\right).
\end{align}
\end{restatable}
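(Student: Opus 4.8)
The plan is to rewrite the instantaneous marginal gain of the Poisson process as a weighted sum of conditional marginal gains, and then to lower bound that sum by $\fps(x_1,\ldots,x_n)-\fa(\Gamma(\psi))$ using adaptive submodularity together with an optimal policy realizing $\fps$. First I would evaluate the left-hand side. Conditioned on $\Psi(t)=\psi$, over an infinitesimal interval $[t,t+dt]$ clock $C_i$ fires with probability $x_i\,dt+o(dt)$, and the chance that two clocks fire is $o(dt)$. When $C_i$ fires we seed node $i$ and reveal its full cascade, so the expected increase of $f(\Psi(t))=|\Gamma(\Psi(t))|$ is exactly the conditional marginal gain $\Delta_{f,\cP}(i\mid\psi)=\E_{\Phi\sim\psi}[f(\psi\cup(i,\Phi(i)))-f(\psi)]$. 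Summing over $i$ and dividing by $dt$ gives
\[
\E\!\left[\frac{d f(\Psi(t))}{dt}\,\middle|\,\Psi(t)=\psi\right]=\sum_{i=1}^{n}x_i\,\Delta_{f,\cP}(i\mid\psi),
\]
so it remains to show $\sum_{i}x_i\,\Delta_{f,\cP}(i\mid\psi)\ge \fps(x_1,\ldots,x_n)-\fa(\Gamma(\psi))$.

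For the lower bound I would bring in an optimal policy $\pi^\star$ attaining the supremum in Definition~\ref{definition:optimal-adaptive-strategy}, so that $\fa(\pi^\star)=\fps(x_1,\ldots,x_n)$ and $\Pr_{\Phi\sim\cP}[i\in V(\pi^\star,\Phi)]=x_i$. The natural mechanism, following~\cite{asadpour2015maximizing}, is to run $\pi^\star$ in the conditional world $\Phi\sim\psi$ (i.e., on top of the already-observed $\psi$), telescope its value step by step, and write each step's expected gain as a conditional marginal gain at the partial realization seen so far. Since every such intermediate partial realization contains $\psi$, Proposition~\ref{prop:infAdaptiveSubmodular} (adaptive submodularity) lets me dominate each term by the corresponding marginal gain at $\psi$, and adaptive monotonicity lets me discard nodes already inside $\Gamma(\psi)$. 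Collecting terms produces a bound of the shape $\fa(\Gamma(\psi))+\sum_i \Pr_{\Phi\sim\psi}[i\in V(\pi^\star,\Phi)]\,\Delta_{f,\cP}(i\mid\psi)$, which after rearrangement would give the claim.

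The hard part is exactly the mismatch the paper highlights. The probabilities $x_i$ defining $\fps$, and the value $\fa(\pi^\star)=\fps$ itself, are \emph{unconditional}, whereas the telescoping above produces the \emph{conditional} value of $\pi^\star$ given $\Phi\sim\psi$ and the conditional selection probabilities $\Pr_{\Phi\sim\psi}[i\in V(\pi^\star,\Phi)]$. Under the independent feedback of~\cite{asadpour2015maximizing}, conditioning on $\psi$ leaves the remaining nodes' states unchanged, so these conditional quantities equal their unconditional counterparts and the argument closes immediately. With full-adoption feedback the realizations are correlated, conditioning on $\psi$ reshapes the distribution, and this identity fails; reconciling the unconditional $x_i$ and $\fps$ with the conditional $\psi$-marginals is the crux, and I expect it to be where adaptive submodularity must be invoked most carefully. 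Finally, I note that this lemma deliberately leaves the possibly large term $\fa(\Gamma(\psi))$ on the right-hand side; bounding it against the optimal non-adaptive value is the job of the companion Lemma~\ref{lem:two-hop-influence-upper-bound} and the in-arborescence boundary-shrinking property.
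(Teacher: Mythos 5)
Your computation of the left-hand side matches the paper's exactly: conditioned on $\Psi(t)=\psi$, the instantaneous gain is $\sum_{i\notin\Gamma(\psi)} x_i\,\Delta_{f}(i\mid\psi)$, since only one clock can fire in an infinitesimal interval and seeds inside $\Gamma(\psi)$ contribute nothing. But the second half of your proposal is not a proof, and you say so yourself: you never reconcile the unconditional quantities $x_i$ and $\fps(x_1,\cdots,x_n)$ with the conditional world $\Phi\sim\psi$. This gap is real, and your route cannot be repaired by ``invoking adaptive submodularity more carefully'': once you run the optimal policy $\pi^\star$ on realizations conditioned to agree with $\psi$, nothing ties $\Pr_{\Phi\sim\psi}[i\in V(\pi^\star,\Phi)]$ back to $x_i$, nor the conditional value of $\pi^\star$ back to $\fa(\pi^\star)=\fps(x_1,\cdots,x_n)$; under full-adoption feedback these conditional quantities can differ from their unconditional counterparts, which is precisely why the independent-feedback argument of \cite{asadpour2015maximizing} does not transfer.

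The paper closes this by never conditioning at all. The device you are missing is a \emph{boosted configuration}: by monotonicity of $\fps$,
\begin{align*}
\fps(x_1,\cdots,x_n) \;\le\; \fps\bigl(\bx \vee \bI_{\Gamma(\psi)}\bigr),
\end{align*}
where every coordinate in $\Gamma(\psi)$ is raised to $1$. The optimal policy $\pi^+$ for this boosted configuration may be assumed to seed all of $\Gamma(\psi)$ first (those nodes must be seeds under every realization), and then, still under the \emph{unconditional} distribution $\cP$, it selects each $i\notin\Gamma(\psi)$ with probability exactly $x_i$ by Definition~\ref{definition:optimal-adaptive-strategy}. Its value therefore decomposes exactly as
\begin{align*}
\fps\bigl(\bx \vee \bI_{\Gamma(\psi)}\bigr) \;=\; \fa\bigl(\Gamma(\psi)\bigr) + \sum_{i\notin\Gamma(\psi)} x_i\,\E_{\Psi_i}\bigl[\Delta_{f}(i\mid\Psi_i)\bigr],
\end{align*}
where $\Psi_i$ is the partial realization at the moment $\pi^+$ selects $i$. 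Since each such $\Psi_i$ is a super-realization of $\psi$, adaptive submodularity (Proposition~\ref{prop:infAdaptiveSubmodular}) gives $\Delta_{f}(i\mid\Psi_i)\le\Delta_{f}(i\mid\psi)$, and combining with your left-hand-side identity finishes the lemma. Note this also explains the exact form of the statement, which you treated as a given: the subtracted term is the unconditional non-adaptive spread $\fa(\Gamma(\psi))$ of the forced seed set --- exactly what the boosted decomposition produces --- rather than the smaller $f(\psi)=|\Gamma(\psi)|$ that your conditional route would naturally target; the slack between the two is what Lemma~\ref{lem:two-hop-influence-upper-bound} and Lemma~\ref{lem:in-arborescence-boundary} later absorb.
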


Next, we introduce the concept of the {\em boundary} of a partial realization.
\begin{Def}[Boundary of a partial realization]
	\label{definition:boundary-partial-realization}
	In the full-adoption feedback model, for any partial realization $\psi$, we use $\partial(\psi)$ to denote the boundary of the partial realization, i.e., the set of nodes $\partial(\psi) \subseteq \Gamma(\psi)$ 
	with minimum cardinality such that there is no directed edges in the original graph $G$ from $\Gamma(\psi) \backslash \partial(\psi)$ to $V\backslash \Gamma(\psi)$. We remark that when there are more than one such sets, we take an arbitrary one.
\end{Def}
\vspace{-2mm}

The main property we rely on the structure of an in-arborescence is that the boundary of any partial realization can be bounded by the number of seeds that have been selected. Formally, we have
\begin{restatable}{lem}{lemInArborescenceBoundary}
	\label{lem:in-arborescence-boundary}
	When the influence graph is an in-arborescence, for any partial realization $\psi$, we have $|\partial(\psi)| \leq  |\dom(\psi)|$.
\end{restatable}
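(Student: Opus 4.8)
The plan is to give an exact combinatorial description of $\partial(\psi)$ in terms of the tree and then simply count. First I would exploit the defining feature of an in-arborescence: every non-root node $w$ has exactly one out-edge, directed toward its parent $\mathrm{par}(w)$ on the path to the root $u$. Consequently the only way a node $w \in \Gamma(\psi)$ can carry an edge of $G$ leaving $\Gamma(\psi)$ is if $\mathrm{par}(w) \notin \Gamma(\psi)$, while the root $u$ has no out-edge at all. Substituting this into Definition~\ref{definition:boundary-partial-realization}, the minimum-cardinality blocking set is forced to be exactly $\partial(\psi) = \{\, w \in \Gamma(\psi) : w \neq u,\ \mathrm{par}(w) \notin \Gamma(\psi) \,\}$: each such $w$ must be included, since it has an outgoing edge into $V \setminus \Gamma(\psi)$, and no other node need be.

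Next I would reorganize $\Gamma(\psi)$ according to the tree. Since $G$ is a tree, the subgraph it induces on $\Gamma(\psi)$ is a forest; let $C_1, \ldots, C_m$ be its connected components. Each $C_j$ is a subtree and hence has a unique node of minimum depth, its \emph{apex} $a_j$ (if two nodes of $C_j$ had minimum depth, their tree-path, which lies in $C_j$ by connectivity, would pass through a strictly shallower node of $C_j$). For the apex we have $\mathrm{par}(a_j) \notin \Gamma(\psi)$ or $a_j = u$, whereas every other node of $C_j$ has its parent inside $C_j \subseteq \Gamma(\psi)$. Comparing with the formula above, $\partial(\psi)$ is precisely the set of apexes $a_j$ different from $u$, so $|\partial(\psi)| \le m$.

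It remains to bound $m$ by $|\dom(\psi)|$, and here I would use that each component contains a seed. Take any $v \in C_j$; since $v \in \Gamma(\psi)$, it is reachable from some seed $s \in \dom(\psi)$ along a live directed path. In an in-arborescence this path is the unique tree path from $s$ up to $v$, and every intermediate node lies on a live prefix of the same path, hence is itself reachable from $s$ and so lies in $\Gamma(\psi)$. Thus $s$ and $v$ are joined inside the induced forest, giving $s \in C_j$. Every component therefore contains at least one element of $\dom(\psi)$, so $m \le |\dom(\psi)|$ and hence $|\partial(\psi)| \le m \le |\dom(\psi)|$.

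The step I expect to require the most care is the last one: one must verify that the reachability witnessing $v \in \Gamma(\psi)$ keeps the entire connecting path inside $\Gamma(\psi)$, which is exactly where both hypotheses are essential — the in-arborescence structure supplies a \emph{unique} upward path, and the full-adoption feedback guarantees that $\Gamma(\psi)$ is the complete reachable set rather than a one-step neighborhood. The first two paragraphs are essentially structural bookkeeping; without the tree hypothesis the inequality can fail, since on a general graph a single seed may produce many boundary nodes.
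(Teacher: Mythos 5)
Your proof is correct, and it takes a somewhat different route than the paper's. The paper argues via minimality of $\partial(\psi)$: for each seed $v \in \dom(\psi)$ it picks the node $\bar v$ where the cascade from $v$ stops on the unique directed path toward the root, observes that $S = \{\bar v : v \in \dom(\psi)\}$ blocks all edges from $\Gamma(\psi)$ to $V \setminus \Gamma(\psi)$, and concludes $|\partial(\psi)| \le |S| \le |\dom(\psi)|$ because $\partial(\psi)$ is a minimum-cardinality blocking set. You instead identify $\partial(\psi)$ exactly --- the nodes of $\Gamma(\psi)$ whose unique out-edge leaves $\Gamma(\psi)$, equivalently the apexes (other than the root) of the connected components of the forest that $G$ induces on $\Gamma(\psi)$ --- and then bound the number of components by the number of seeds, using the fact that the live path witnessing $v \in \Gamma(\psi)$ lies entirely inside $\Gamma(\psi)$. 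Both arguments rest on the same two structural facts (unique upward paths in an in-arborescence, and live paths staying in $\Gamma(\psi)$), and both ultimately charge one boundary node to one seed; yours runs the charging in the opposite direction (boundary node $\to$ component $\to$ seed rather than seed $\to$ blocking node). What your version buys is an exact combinatorial description of the boundary and extra robustness: the paper's construction, read literally as ``the node of $\Gamma(\psi)$ on $v$'s root-path closest to the root,'' need not yield a blocking set, because a \emph{different} seed's cascade can place nodes of $\Gamma(\psi)$ on $v$'s path above a gap, so that $\bar v$ sits strictly above a boundary node of $v$'s own cascade; the paper's argument is sound only under the intended reading of $\bar v$ as the endpoint of $v$'s own cascade. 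Your component-based accounting avoids this subtlety entirely, at the cost of being longer.
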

Now, we give an upper bound on the term $\fa(\Gamma(\psi))$.
\begin{restatable}{lem}{lemInfluenceSpreadTwoHop}
	\label{lem:two-hop-influence-upper-bound}
	For any partial realization $\psi$
	\begin{align}
	\label{eq:in-arborescence-eq5}
	\fa(\Gamma(\psi)) \leq |\Gamma(\psi)| + \fa(\partial(\psi)).
	\end{align}
	Moreover, when the influence graph is an in-arborescence, we have
	\begin{align}
	\label{eq:in-arborescence-eq6}
	\fa(\Gamma(\psi)) \leq |\Gamma(\psi)| + \OPT_{N}(G, |\dom(\psi)|).
	\end{align}
\end{restatable}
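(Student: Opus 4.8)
The plan is to prove the two inequalities in turn, with the general bound \eqref{eq:in-arborescence-eq5} doing the real work and the in-arborescence bound \eqref{eq:in-arborescence-eq6} following quickly from it together with Lemma~\ref{lem:in-arborescence-boundary}. For \eqref{eq:in-arborescence-eq5} I would argue pointwise over live-edge graphs and then take expectations, so fix an arbitrary $L \sim \cP$ and treat $\Gamma(\psi)$ as a (deterministic) seed set. The reachable set $\Gamma(\Gamma(\psi), L)$ decomposes as $\Gamma(\psi)$ together with the nodes outside $\Gamma(\psi)$ that it reaches. The crucial observation is that every node of $V \backslash \Gamma(\psi)$ reachable from $\Gamma(\psi)$ in $L$ is already reachable from the boundary $\partial(\psi)$: given such a node $w$ and a path to it originating in $\Gamma(\psi)$, look at the last vertex $v$ on the path still lying in $\Gamma(\psi)$; the following edge $(v, w')$ leaves $\Gamma(\psi)$, so by the defining property of $\partial(\psi)$ (no edges of $G$, hence none of $L \subseteq G$, from $\Gamma(\psi) \backslash \partial(\psi)$ to $V \backslash \Gamma(\psi)$) we must have $v \in \partial(\psi)$, and therefore $w$ is reachable from $\partial(\psi)$ in $L$. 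This yields the containment $\Gamma(\Gamma(\psi), L) \backslash \Gamma(\psi) \subseteq \Gamma(\partial(\psi), L)$, whence
\[
f(\Gamma(\psi), L) = |\Gamma(\psi)| + \bigl|\Gamma(\Gamma(\psi), L) \backslash \Gamma(\psi)\bigr| \leq |\Gamma(\psi)| + |\Gamma(\partial(\psi), L)| = |\Gamma(\psi)| + f(\partial(\psi), L).
\]
Taking the expectation over $L \sim \cP$ gives $\fa(\Gamma(\psi)) \leq |\Gamma(\psi)| + \fa(\partial(\psi))$, which is \eqref{eq:in-arborescence-eq5}.

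For the in-arborescence bound \eqref{eq:in-arborescence-eq6} I would feed \eqref{eq:in-arborescence-eq5} into two elementary monotonicity facts about the optimal non-adaptive value. First, since $\partial(\psi)$ is itself a feasible seed set of its own size, $\fa(\partial(\psi)) \leq \OPT_{N}(G, |\partial(\psi)|)$. Second, $\OPT_{N}(G, k)$ is nondecreasing in $k$, so Lemma~\ref{lem:in-arborescence-boundary}, which supplies $|\partial(\psi)| \leq |\dom(\psi)|$ for in-arborescences, gives $\OPT_{N}(G, |\partial(\psi)|) \leq \OPT_{N}(G, |\dom(\psi)|)$. Chaining these inequalities with \eqref{eq:in-arborescence-eq5} produces
\[
\fa(\Gamma(\psi)) \leq |\Gamma(\psi)| + \OPT_{N}(G, |\dom(\psi)|),
\]
as required.

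The only delicate point is the boundary-reachability containment in the first part: it is where the definition of $\partial(\psi)$ is used in an essential way, and one must make sure the "last vertex in $\Gamma(\psi)$'' argument is stated for a generic path and that the boundary property, though phrased for the original graph $G$, transfers to the random subgraph $L$. Everything after that is bookkeeping, and the in-arborescence specialization is immediate once Lemma~\ref{lem:in-arborescence-boundary} is available; in particular, the only place the tree structure enters is through that lemma, so \eqref{eq:in-arborescence-eq5} itself holds for arbitrary influence graphs.
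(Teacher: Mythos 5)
Your proposal is correct and follows essentially the same route as the paper's proof: both argue pointwise over live-edge graphs that, by the defining property of $\partial(\psi)$, any node activated outside $\Gamma(\psi)$ must be reached through the boundary (the paper phrases this as $\Gamma(\Gamma(\psi),\phi)\backslash\Gamma(\partial(\psi),\phi)\subseteq\Gamma(\psi)$, you as $\Gamma(\Gamma(\psi),L)\backslash\Gamma(\psi)\subseteq\Gamma(\partial(\psi),L)$ --- the same containment stated contrapositively), and then take expectations. Your derivation of \eqref{eq:in-arborescence-eq6} also matches the paper's, chaining \eqref{eq:in-arborescence-eq5} with Lemma~\ref{lem:in-arborescence-boundary} and the monotonicity of $\OPT_{N}(G,\cdot)$ in the budget.
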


For any fixed influence graph $G$, we can view $\OPT_{N}(G, k)$ as a function of the budget $k$, we prove that $\OPT_{N}(G, k)$ is {\em``weak concave''} for $k$, as stated in the following lemma.
\begin{restatable}{lem}{lemNearConcaveOpt}
	\label{lem:near-concave-opt}
	For any fixed influence graph $G$, let $X$ be a random variable taking value from $\{0, 1\cdots, n\}$, with mean value $\E[X] = k$. Then we have
	\begin{align}
	\label{eq:near-concave-opte}
	\E\left[\OPT_N(G, X)\right] \leq \frac{e}{e - 1}\OPT_{N}\left(G, \E[X]\right) = \frac{e}{e - 1}\OPT_N(G, k).
	\end{align}
\end{restatable}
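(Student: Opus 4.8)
The plan is to exploit the fact that, although $\OPT_{N}(G,\cdot)$ is not itself concave, it is sandwiched between a genuinely concave function and $\frac{e}{e-1}$ times that same function; once this sandwich is in place, the claim follows from a single application of Jensen's inequality. Concretely, I would run the greedy algorithm on the entire ground set $V$, producing an ordering $v_1, v_2, \ldots, v_n$ of the nodes, where $v_i$ is the element of maximum marginal gain with respect to $\{v_1, \ldots, v_{i-1}\}$ under $\fa$. Writing $g_i := \Delta_{\fa}(v_i \mid \{v_1,\ldots,v_{i-1}\})$ for the $i$-th marginal gain, I would set $c(k) := \sum_{i=1}^{k} g_i$, extended to all of $[0,n]$ by linear interpolation.

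The first key step is to observe that the greedy marginals are non-increasing, $g_1 \ge g_2 \ge \cdots \ge g_n \ge 0$: by submodularity of $\fa$, the marginal of $v_{i+1}$ evaluated against the larger set $\{v_1,\ldots,v_i\}$ is at most its marginal against $\{v_1,\ldots,v_{i-1}\}$, which in turn is at most the maximum marginal $g_i$ chosen at step $i$. Hence $c$ is a concave, nondecreasing, piecewise-linear function. Because greedy selection is prefix-consistent (the greedy solution for budget $k$ is exactly $\{v_1,\ldots,v_k\}$ for every $k$), the single function $c$ simultaneously records the greedy value at every budget, and this yields a two-sided bound valid for each integer $k$: on one hand, $\{v_1,\ldots,v_k\}$ is a feasible set of size $k$, so $c(k) \le \OPT_{N}(G,k)$; on the other hand, the standard $(1-1/e)$ guarantee for greedy on a monotone submodular function with $\fa(\emptyset)=0$ gives $\OPT_{N}(G,k) \le \frac{e}{e-1} c(k)$. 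Both inequalities pass to real arguments once $\OPT_{N}(G,\cdot)$ is extended by linear interpolation, using that $c$ is itself linear between consecutive integers and lies below $\OPT_{N}$ at the integers.

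With the sandwich $c \le \OPT_{N}(G,\cdot) \le \frac{e}{e-1} c$ in hand, I would conclude through the chain
\begin{align*}
\E\left[\OPT_{N}(G,X)\right] \le \frac{e}{e-1}\E\left[c(X)\right] \le \frac{e}{e-1}\, c\left(\E[X]\right) \le \frac{e}{e-1}\,\OPT_{N}\left(G,\E[X]\right),
\end{align*}
where the first step uses the upper half of the sandwich, the middle step is Jensen's inequality applied to the concave function $c$, and the last step uses the lower half of the sandwich at $\E[X]=k$. I expect the load-bearing part of the argument to be establishing the concave sandwich rather than the probabilistic step: the monotonicity of the greedy marginals is exactly what upgrades the crude $(1-1/e)$ approximation into a bound against a bona fide concave surrogate that is valid \emph{uniformly} over all budgets, and it is this uniformity that lets a single $c$ serve both the random argument $X$ and the deterministic argument $\E[X]$. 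The only genuinely delicate bookkeeping is the treatment of a non-integer mean $\E[X]$, which is handled cleanly by the linear-interpolation extension noted above.
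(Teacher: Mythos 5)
Your proposal is correct and is essentially the paper's own argument: the paper likewise sandwiches $\OPT_N(G,\cdot)$ between the greedy value $\G_N(G,\cdot)$ (which is concave in the budget by submodularity) and $\frac{e}{e-1}\G_N(G,\cdot)$, then applies Jensen's inequality. Your write-up merely makes explicit two points the paper asserts without detail — the non-increasing greedy marginals that yield concavity, and the linear-interpolation treatment of non-integer budgets — so there is no substantive difference.
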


Putting things together, we can prove Theorem \ref{thm:adaptivity-gap-in-arborescence}.

\begin{proof}[Proof of Theorem \ref{thm:adaptivity-gap-in-arborescence}]
	When the influence graph is an in-arborescence, for any configuration ($x_1, \cdots, x_n$) satisfying $\sum_{i}x_i = k$, for any $t\in [0,1]$ and any fixed partial realization $\psi$, we have
	\begin{align}
	\E\left[\frac{d f\left(\Psi(t)\right)}{dt} | \Psi(t) = \psi\right] &\geq \fps\left(x_1, \cdots, x_n\right) - \fa\left(\Gamma(\psi)\right)&\text{by Lemma \ref{lem:possion-process-vs-adapt-opt}} \notag\\ 
	&\geq \fps(x_1, \cdots, x_n) - |\Gamma(\psi)| - \fa(\partial(\psi)) &\text{by Lemma \ref{lem:two-hop-influence-upper-bound}} \notag\\
	&\geq \fps(x_1, \cdots, x_n) - |\Gamma(\psi)|  - \OPT_N(G, |\dom(\psi)|)&\text{by Lemma \ref{lem:in-arborescence-boundary}} \notag\\
	&= \fps(x_1, \cdots, x_n) - f(\Psi(t)) - \OPT_N(G, |\dom(\Psi(t))|). \label{eq:in-arborescence-eq10}
	\end{align}
	Taking expectation over $\Psi(t)$, we have for any $t\in [0,1]$, 
	\begin{align}
	\frac{d}{dt}\E\left[f\left(\Psi(t)\right)\right] &= \E\left[\frac{d f\left(\Psi(t)\right)}{dt}\right]
		= \E_{\Psi(t)}\E\left[\frac{d f\left(\Psi(t)\right)}{dt} \mid \Psi(t) \right]
		\notag \\
	&\geq \fps(x_1, \cdots, x_n) - \E\left[f(\Psi(t))\right] - \E\left[\OPT_N(G, |\dom(\Psi(t))|)\right]\notag\\ 
	&\geq \fps(x_1, \cdots, x_n) - \frac{e}{e - 1}\OPT_{N}(G, k) - \E\left[f(\Psi(t))\right].\label{eq:in-arborescence-eq11}
	\end{align}
	The first equality above is by the linearity of expectation.
	The second equality above is by the law of total expectation.
	The first inequality is by Eq.\eqref{eq:in-arborescence-eq10}, and
		the second inequality holds due to Lemma \ref{lem:near-concave-opt} and the fact that
	\begin{align*}
	\label{eq:in-arborescence-eq12}
	\E\left[\dom(\Psi(t))\right] =\sum_{i} (1 - e^{-tx_i}) \leq \sum_{i}tx_i \leq k
	\end{align*} 
	for any $t \leq 1$. Solving the above differential inequality in Eq.\eqref{eq:in-arborescence-eq11} gives us
	\begin{align}
	\E\left[f(\Psi(t))\right] \geq (1 - e^{-t})\left[\fps(x_1, \cdots, x_n) - \frac{e}{e - 1}\OPT_{N}(G, k)\right].
	\end{align}
	In particular, when $t = 1$, we have
	\begin{align}
	\E\left[f(\Psi(1))\right] \geq \left(1 - \frac{1}{e}\right)\left[\fps(x_1, \cdots, x_n) - \frac{e}{e - 1}\OPT_{N}(G, k)\right].\label{eq:in-arborescence-eq13}
	\end{align}
	Finally, we have
	\begin{align}
	\OPT_{N}(G, k) &= \sup_{x_1 + \cdots + x_n = k}F(x_1, \cdots, x_n)\notag\\
	& \ge \sup_{x_1 + \cdots + x_n = k} \E\left[f(\Psi(1))\right]  \notag \\
	&\geq \sup_{x_1 + \cdots + x_n = k}\left(1 - \frac{1}{e}\right)\left[\fps(x_1, \cdots, x_n) - \frac{e}{e - 1}\OPT_{N}(k)\right]\notag\\
	&\ge \left(1 - \frac{1}{e}\right)\OPT_{A}(G, k) - \OPT_{N}(G,k). \label{eq:in-arborescence-eq14}
	\end{align}
	The first equality above comes from the pipage rounding procedure in \cite{calinescu2011maximizing}. 
	The first inequality above is by Lemma~\ref{lem:possion-process-vs-multilinear-extension}.
	The second inequality is by Eq.~\eqref{eq:in-arborescence-eq13}.
	The last equality is by the definition of $f^+$ (Definition~\ref{definition:optimal-adaptive-strategy}).
	Thus we conclude that the adaptivity gap is at most $\frac{2e}{e - 1}$ in the case of an in-arborescence.
\end{proof}

	\section{Adaptivity Gap for Out-arborescence}
\label{sec:adaptivity-gap-out-arborescence}
\vspace{-2mm}

In this section, we give an upper bound on the adaptivity gap when the influence graph is an out-arborescence. Formally,
\begin{thm}
	\label{thm:adaptivity-gap-out-arborescence}
	When the influence graph is an out-arborescence, the adaptivity gap for the IM problem in the IC model with full-adoption feedback is at most 2.
\end{thm}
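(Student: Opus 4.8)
The plan is to follow the same high-level strategy as for the in-arborescence --- pass to the marginals of the optimal adaptive policy and compare against the multilinear extension --- but to drop the Poisson-process/differential-inequality machinery in favour of a direct, per-node comparison that exploits the line structure of ancestor sets. Let $\pi^\star$ be an optimal adaptive policy and set $x_i = \Pr_{\Phi\sim\cP}[\,i \in V(\pi^\star,\Phi)\,]$, so that $\sum_i x_i = \E[\,|V(\pi^\star,\Phi)|\,] \le k$ and $\OPT_A(G,k)=\fa(\pi^\star)$. Since $F$ is monotone and $\OPT_N(G,k)=\sup_{\sum_i x_i = k}F(x_1,\dots,x_n)$ by the pipage rounding used in the proof of Theorem~\ref{thm:adaptivity-gap-in-arborescence}, we have $F(x_1,\dots,x_n)\le \OPT_N(G,k)$. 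Expanding both spreads as sums over target nodes, $\OPT_A(G,k)=\sum_{v\in V}\Pr[\,v\text{ activated by }\pi^\star\,]$ and $F(x)=\sum_{v\in V}\Pr[\,v\text{ activated by }R\,]$, where $R$ includes each $i$ independently with probability $x_i$, it suffices to prove the per-node bound
\begin{align}
\Pr[\,v\text{ activated by }\pi^\star\,]\ \le\ 2\,\Pr[\,v\text{ activated by }R\,]\label{eq:out-per-node}
\end{align}
for every $v$. Because $G$ is an out-arborescence, the only seeds that can activate $v$ are its ancestors, which form a single directed line $b_m\to\cdots\to b_1\to b_0=v$; this is the precise sense in which it suffices to prove a stronger statement on one line.

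On this line I would set $r_t=\prod_{i=1}^{t}p_{b_i b_{i-1}}$, the probability that the whole path from $b_t$ down to $v$ is live, so that $1=r_0\ge r_1\ge\cdots\ge r_m$ and the events ``$b_t$ reaches $v$'' are nested in $t$. Assuming (without loss) that $\pi^\star$ never seeds an already active node, the tree structure yields a clean renewal description: the first ancestor of $v$ that $\pi^\star$ seeds faces an entirely fresh path to $v$ --- observing any edge on it would have required a cascade through the then-inactive seed --- so it reaches $v$ with probability exactly $r_t$; if instead it is blocked at an edge $(b_j,b_{j-1})$, all nodes down to the block become active and the process restarts on the strictly shorter, still-fresh line $b_{j-1}\to\cdots\to v$. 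Iterating, $v$ is activated if and only if the closest-to-$v$ ancestor that $\pi^\star$ ever seeds reaches $v$. The non-adaptive set $R$ admits the identical description, except that the closest seeded index is now independent of the edge statuses, which telescopes into the closed form $\Pr[\,v\text{ activated by }R\,]=\sum_{t=0}^{m}r_t\,x_{b_t}\prod_{s<t}(1-x_{b_s})$.

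The whole adaptive advantage is therefore concentrated in one place: $\pi^\star$ can push its closest seeded ancestor nearer to $v$ --- to a position with larger $r_t$ --- exactly on the realizations where the upper edges were observed to be blocked, so its closest index is positively correlated with the edge outcomes, whereas that of $R$ is not. To turn this into \eqref{eq:out-per-node} I would read off a family of constraints on $\pi^\star$ from the renewal recursion above (each block confines all subsequent useful seeds to the fresh sub-line below it) and telescope them against the closed form for $R$, bounding the correlated ``closest reaches $v$'' probability by twice the independent one. The factor $2$ is exactly what this route gives and cannot be improved: a long, almost-deterministic path (edge probabilities $p\to1$) together with the walk-down policy that always reseeds just below the last observed block activates $v$ with probability $1$, while the induced non-adaptive value tends to $\tfrac{1}{1+p}$, so the per-node ratio approaches $2$. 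I expect this last step to be the main obstacle, precisely because the adaptive activation probability is \emph{not} a function of the marginals $\{x_{b_t}\}$ alone but of the adaptive order in which ancestors are seeded; the crude union bound $\Pr[\,v\text{ activated by }\pi^\star\,]\le\sum_t r_t x_{b_t}$ is hopeless here (it already overshoots $2\Pr[\,v\text{ activated by }R\,]$ when several ancestors share one live bottleneck edge), so the real work is to exploit the nestedness of the reach events, together with the seed-only-inactive reduction, to control the correlated choice uniformly over all policies.
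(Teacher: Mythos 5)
Your setup is the same as the paper's, and the parts you do carry out are correct: the per-node reduction (summing $F_u$ and $\fps_u$ over target nodes, which is legitimate since $\fps(\cdot)\le\sum_u \fps_u(\cdot)$), the observation that the ancestors of a fixed target form a directed line, the closed form $\Pr[v\text{ activated by }R]=\sum_t r_t x_{b_t}\prod_{s<t}(1-x_{b_s})$ (an equivalent regrouping of the paper's telescoping identity, Lemma~\ref{lem:induction-multilinear-ex}, with $r_t$ playing the role of $p_t$), and the ``freshness/renewal'' description of how an adaptive policy interacts with the line (which is exactly what underlies the paper's Lemma~\ref{lemma:upper-bound-adaptive}). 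However, the proposal stops precisely where the theorem's content lies: you never write down the constraints on $\pi^\star$, and you never perform the combination that produces the uniform factor $2$; you explicitly flag this as ``the main obstacle.'' That missing step is not routine bookkeeping --- it is the whole proof --- so this is a genuine gap.

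Here is what the paper does to close it, in your notation (node $1=v$, $p_i=r_i$). First, the constraint family (Lemma~\ref{lemma:upper-bound-adaptive}): for \emph{every} index $i$, $\fps_1(x_1,\ldots,x_n)\le \sum_{j=1}^{i}x_jp_j+p_{i+1}$. This follows from your renewal picture: the events $\Evt_j=$ ``node $1$ becomes active right after node $j$ is seeded'' are disjoint, $\Pr[\Evt_j]\le x_jp_j$ because $\Evt_j$ requires that $j$ be seeded while the path below it is still fresh, and any activation coming from above node $i$ forces node $i+1$ to reach node $1$, giving the additive $p_{i+1}$. Second --- and this is the idea you are missing --- the truncation rule for the telescoping: let $j$ be the \emph{minimum} index with $F_j(0,\ldots,0,x_{j+1},\ldots,x_n)>\tfrac12$ (and $j=n+1$ if none exists). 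For $i<j$ each telescoped term equals $x_ip_i\bigl(1-F_i(0,\ldots,0,x_{i+1},\ldots,x_n)\bigr)\ge\tfrac12 x_ip_i$, while the remainder equals $p_jF_j(0,\ldots,0,x_j,\ldots,x_n)\ge\tfrac12 p_j$ by monotonicity and the choice of $j$; pairing with the constraint at $i=j-1$ gives $F_1\ge\tfrac12\fps_1$. Note that the threshold must be placed on the tail activation probability $F_j$, not on the quantity $\prod_{s<t}(1-x_{b_s})$ that appears naturally in your closed form: for instance with $x=(0,\,0.6,\,0,\ldots,0)$ and $p_2=p_3=\epsilon$, cutting where $\prod_{s\le j}(1-x_s)$ first drops below $\tfrac12$ (i.e.\ $j=2$) and pairing with the constraint $\fps_1\le x_2p_2+p_3$ would demand $F_1\ge\tfrac12(0.6\epsilon+\epsilon)=0.8\epsilon$, which is false since $F_1=0.6\epsilon$; the paper's threshold instead yields $j=n+1$ and pairs $F_1$ against $\fps_1\le\sum_j x_jp_j$. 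So both the exact constraint family and this particular truncation are needed, and neither appears in your proposal.
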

We first introduce some notations. For any node $u \in V$ and any seed set $S \subseteq V$, we define $\fa_{u}(S):= \Pr_{\Phi}\left[u \in \Gamma(S, \Phi)\right]$, i.e., the probability that the node $u$ 
	activated when $S$ is the seed set. 
Similarly, for any adaptive policy $\pi$, we define $\fa_{u}(\pi) :=\Pr_{\Phi}\left[u \in \Gamma(V(\pi, \Phi), \Phi)\right]$, i.e., the probability that the node $u$ is activated under policy $\pi$. We would extend the definition for the multilinear extension (Definition~\ref{definition:multilinear-extension}) and the definition for $\fps$ (Definition~\ref{definition:optimal-adaptive-strategy}) correspondingly. To be more specific, we define
\begin{align}
\label{eq:multilinear-extension-per-node-def}
F_{u}(x_1, \cdots, x_n) = \sum_{S \subseteq [n]}\left[ \left(\prod_{i \in S}x_i\prod_{i \notin S}\left(1 - x_i\right)\right) \fa_{u}(S)\right],
\end{align}
and 
\begin{align}
\label{eq:optimal-adaptive-strategy-per-node-def}
\fps_{u}(x_1, \cdots, x_n) = \sup_{\pi}\left\{\fa_{u}(\pi) : \Pr_{\Phi\sim\cP}\left[i \in V(\pi, \Phi)\right] = x_i, \, \forall i \in [n] \right\}.
\end{align}
In order to show Theorem~\ref{thm:adaptivity-gap-out-arborescence}, we again transform an adaptive policy to a non-adaptive policy and compare their influence spread. Here, we utilize a new approach based on the structure of out-arborescences and prove a stronger result. That is, we would prove that the probability for any node $u$ become active in the multilinear extension (policy) is at least half of the optimal adaptive policy (see Eq.~\eqref{eq:out-arborescence-eq1}). This requires use to give fine-grained bound on the optimal adaptive policy (Lemma~\ref{lemma:upper-bound-adaptive}) and the multilinear extension (Lemma~\ref{lem:induction-multilinear-ex}).
\begin{proof}
	When the influence graph is an out-arborescence, for any node $u\in V$ and any configuration ($x_1, \cdots, x_n$), we are going to prove that
	\begin{align}
	\label{eq:out-arborescence-eq1}
	\fps_{u}(x_1, \cdots, x_n) \leq 2 F_{u}(x_1, \cdots, x_n).
	\end{align}
	This suffices to prove Theorem \ref{thm:adaptivity-gap-out-arborescence} because
	\begin{align*}
&\OPT_{N}(G, k) = \sup_{x_1 + \cdots + x_n = k}F(x_1, \cdots, x_n)  
=  \sup_{x_1 + \cdots + x_n = k} \sum_{u}F_{u}(x_1, \cdots, x_n) \\
&\ge \frac{1}{2}  \cdot \sup_{x_1 + \cdots + x_n = k}  \sum_{u} \fps_{u}(x_1, \cdots, x_n) \ge \frac{1}{2}  \cdot  \sup_{x_1 + \cdots + x_n = k} \fps(x_1, \cdots, x_n)  
\ge \frac{1}{2}  \cdot \OPT_{A}(G, k).
\end{align*}
	We note that node $u$'s predecessors (nodes that can reach node $u$ in the original graph) 
	form a directed line when the influence graph is an out-arborescence. 
	We slightly abuse the
		notation and use node $i$ to indicate the  $(i - 1)^{th}$ predecessor of node $u$, notice that node $u$ itself is represented as node 1. 
		We ignore all other nodes since they do not affect either sides of Eq.~\eqref{eq:out-arborescence-eq1}. We use $p_i$ to denote the probability that the node $i$ can reach node $1$. The following lemma gives an upper bound on the optimal adaptive strategy.
	\begin{restatable}{lem}{lemUpperAdaptiveOptimal}
		\label{lemma:upper-bound-adaptive}
		For any $i $, $\fps_{1}(x_1, \cdots, x_n) \leq \sum_{j = 1}^{i} x_j p_j + p_{i + 1}.$ 
	\end{restatable}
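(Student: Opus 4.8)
The plan is to fix an arbitrary feasible policy $\pi$ with marginals $\Pr_{\Phi}[j \in V(\pi,\Phi)] = x_j$, prove the inequality $\fa_{1}(\pi) \le \sum_{j=1}^{i} x_j p_j + p_{i+1}$ for this single $\pi$, and then take the supremum over all such $\pi$; since the right-hand side depends only on the marginals $x_j$ and the graph parameters $p_j$, this immediately upgrades to the bound on $\fps_1$. Throughout I would work only on the directed line $m \to \cdots \to 2 \to 1$ of predecessors of node $1$, because in an out-arborescence no seed placed off this line can ever reach node $1$. For a node $j$ on the line let $P_j$ be the path $j \to j-1 \to \cdots \to 1$ and let $R_j$ be the event that every edge of $P_j$ is live, so that $\Pr[R_j] = p_j$; note that for $j \ge i+1$ the path $P_{i+1}$ is a suffix of $P_j$, hence $R_j \subseteq R_{i+1}$.

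The organizing idea is to attribute node $1$'s activation to the unique seed that first activates it. I would define $B_j$ to be the event that $\pi$ seeds node $j$ at some time $\tau_j$, that node $1$ is not yet active just before $\tau_j$, and that $R_j$ holds, so that seeding $j$ is precisely what activates node $1$. These events are pairwise disjoint, since once $B_j$ occurs node $1$ is active and no later seed can satisfy the ``not yet active'' condition, and node $1$ is active if and only if some $B_j$ occurs; therefore $\fa_1(\pi) = \sum_j \Pr[B_j]$. I would then split this sum at index $i$. The tail $\sum_{j \ge i+1}\Pr[B_j] = \Pr[\bigcup_{j\ge i+1} B_j]$ is at most $\Pr[R_{i+1}] = p_{i+1}$ by the nesting $R_j \subseteq R_{i+1}$, which handles the additive $p_{i+1}$ term cleanly. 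The head is then controlled by the per-term bound $\Pr[B_j] \le x_j p_j$ established next.

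The heart of the argument is this per-term bound, and it is where the line structure is indispensable. Let $D_j$ be the event that $j$ is seeded with node $1$ inactive just before $\tau_j$, so that $B_j = D_j \cap R_j$ and $D_j$ is determined by the feedback $H$ gathered before $j$ is chosen. The key structural claim is that on $D_j$ no edge of $P_j$ can have been revealed \emph{live} by $H$: under full-adoption feedback on a line, a cascade that exposes an edge of $P_j$ as live propagates monotonically downward toward node $1$, so it must either reach node $1$ (contradicting that $1$ is still inactive) or be halted by an edge of $P_j$ that it exposes as \emph{blocked}. Consequently, on $D_j$ the history $H$ either exposes no edge of $P_j$ at all --- leaving $R_j$ fresh and independent with probability exactly $p_j$ --- or already determines $R_j$ to be false; in both cases $\Pr[R_j \mid H] \le p_j$. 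Since $D_j$ is $H$-measurable, taking expectations gives $\Pr[B_j] = \E[\mathbf{1}_{D_j}\,\Pr[R_j \mid H]] \le p_j \Pr[D_j] \le p_j\, x_j$. Summing the head, adding the tail bound $p_{i+1}$, and taking the supremum over $\pi$ yields the lemma.

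I expect the main obstacle to be making the structural claim fully rigorous: formalizing that, before $\tau_j$, no live edge of $P_j$ is exposed unless node $1$ is already active or $R_j$ is already falsified, together with the bookkeeping that $D_j$ and the exposed portion of $P_j$ are measurable with respect to the pre-$\tau_j$ feedback so that the conditioning step is valid. This is exactly the point at which the uniqueness of the downward path in an out-arborescence (the predecessors forming a single directed line) is used; in a general DAG a node could acquire multiple incoming routes and the clean dichotomy ``either unexposed or falsified'' would fail.
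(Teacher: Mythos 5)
Your proof is correct and follows essentially the same route as the paper's: your events $B_j$ coincide with the paper's $\Evt_j$ (node $1$ becomes active right after seed $j$), your tail bound $\sum_{j\ge i+1}\Pr[B_j]\le p_{i+1}$ is the paper's bound on $\Pr[\Evt_{i+1:}]$, and your per-term bound $\Pr[B_j]\le x_j p_j$ via conditioning on the pre-seeding history is the paper's argument with the event $\mF_j$. The only (cosmetic) difference is that the paper conditions on all of $\{1,\dots,j\}$ being inactive, under which no edge of the path $P_j$ has been revealed and hence $\Pr[\Evt_j\mid\mF_j]=p_j$ exactly, whereas you condition only on node $1$ being inactive and then handle the case of partially revealed paths via your ``unexposed or falsified'' dichotomy, which is equally valid and spells out the fresh-randomness bookkeeping the paper leaves implicit.
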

	
	We measure the marginal contribution of node $i$ in the next lemma.
	Intuitively, we can see that $F_{1}(0, \cdots, 0, x_i, \cdots, x_n) -  F_{1}(0, \cdots, 0, x_{i +1}, \cdots, x_n)$
	measures the marginal contribution of $i$ in activating node $1$, when node $i$ moves from
	no probability of being selected as a seed to probability of $x_i$ being selected as the seed, 
	under the situation that no nodes in $\{1, \ldots, i-1 \}$ can be seeds while node $j > i$ has probability
	$x_j$ being selected as a seed.
	Then this marginal contribution only happens when all three conditions hold:
		(a) possible seeds in $\{i+1, \ldots, n\}$ cannot activate $i$, which has probability
			$1 - F_i(0, \cdots, 0, x_{i +1}, \cdots, x_n)$; 
		(b) node $i$ is activated as a seed, which has probability $x_i$, and
		(c) node $i$ passes influence and activate node $1$, which has probability $p_i$.
	\begin{restatable}{lem}{lemMultilinearExtensionTelescope}
		\label{lem:induction-multilinear-ex}
		For any $i$, we have
		\begin{align*}
		F_{1}(0, \cdots, 0, x_i, \cdots, x_n) -  F_{1}(0, \cdots, 0, x_{i +1}, \cdots, x_n) = x_i p_i \left(1 - F_i(0, \cdots, 0, x_{i +1}, \cdots, x_n)\right).
		\end{align*}
	\end{restatable}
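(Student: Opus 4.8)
The plan is to give both sides a clean probabilistic meaning and then reduce the identity to two independence facts coming from the line structure of the predecessors. Recall that, by the definition of the multilinear extension as an expectation, $F_1(\mathbf{x})$ equals the probability that node $1$ is activated when each node $j$ is seeded independently with probability $x_j$ and the cascade is then run on the random live-edge graph; likewise $F_i(\mathbf{x})$ is the probability that node $i$ is activated. Writing $\mathbf{x}^{(i)} = (0,\ldots,0,x_i,x_{i+1},\ldots,x_n)$ with zeros in coordinates $1,\ldots,i-1$, the left-hand side is $F_1(\mathbf{x}^{(i)}) - F_1(\mathbf{x}^{(i+1)})$, the change in node $1$'s activation probability caused by turning node $i$'s seed probability from $0$ up to $x_i$.

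First I would exploit that the predecessors of node $1$ form a directed line with influence flowing $n \to n-1 \to \cdots \to 2 \to 1$, so node $i$ is a cut vertex: under either configuration no node in $\{1,\ldots,i-1\}$ is ever a seed, hence every live path from a seed to node $1$ must pass through node $i$. Consequently node $1$ is activated if and only if node $i$ is activated (by the seeds in $\{i,\ldots,n\}$) and in addition the path from $i$ down to $1$ is entirely live. The event that this path is live depends only on the edges $\{(\ell,\ell-1): 2 \le \ell \le i\}$, which is disjoint from the edges and seed coins determining whether node $i$ is activated; by independence these factor, giving $F_1(\mathbf{x}^{(i)}) = p_i \cdot \Pr[\,i \text{ activated under } \mathbf{x}^{(i)}\,]$ and $F_1(\mathbf{x}^{(i+1)}) = p_i \cdot F_i(\mathbf{x}^{(i+1)})$.

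The second step handles the activation probability of node $i$ itself. Under $\mathbf{x}^{(i)}$, node $i$ is activated exactly when it is seeded directly (probability $x_i$, an independent coin) or when it is activated through the seeds in $\{i+1,\ldots,n\}$; the latter event is precisely what $F_i(\mathbf{x}^{(i+1)})$ computes, and it is independent of node $i$'s own seed coin. Hence $\Pr[\,i \text{ activated under } \mathbf{x}^{(i)}\,] = x_i + (1-x_i) F_i(\mathbf{x}^{(i+1)})$, so the difference of the two activation probabilities of node $i$ equals $x_i\bigl(1 - F_i(\mathbf{x}^{(i+1)})\bigr)$. Multiplying by the common factor $p_i$ yields exactly $x_i p_i\bigl(1 - F_i(\mathbf{x}^{(i+1)})\bigr)$, the claimed right-hand side.

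The main obstacle is making the two factorizations rigorous rather than merely intuitive: I must argue carefully that ``the path $i \to 1$ is live,'' ``node $i$ is activated via upstream seeds,'' and ``node $i$'s own seed coin'' are governed by three mutually disjoint sources of randomness (two disjoint blocks of edges plus the independent seed indicators), which is exactly where the out-arborescence (line) structure is essential---on a general graph these events would share edges and fail to be independent. A minor point to verify is the boundary convention (that $\mathbf{x}^{(i+1)}$ carries a $0$ in coordinate $i$, so $F_i(\mathbf{x}^{(i+1)})$ indeed omits node $i$'s own seed), after which the decomposition $\Pr[\,i \text{ activated}\,] = x_i + (1-x_i)F_i(\mathbf{x}^{(i+1)})$ is immediate.
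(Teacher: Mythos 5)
Your proof is correct and takes essentially the same route as the paper's: both factor out $p_i$ using the line/cut-vertex structure (so $F_1 = p_i \cdot F_i$ under each configuration) and then decompose node $i$'s activation probability as $x_i + (1-x_i)\,F_i(0,\cdots,0,x_{i+1},\cdots,x_n)$, which is exactly the paper's identity $F_i(0,\cdots,0,x_i,\cdots,x_n) = 1 - (1-x_i)\bigl(1 - F_i(0,\cdots,0,x_{i+1},\cdots,x_n)\bigr)$ in equivalent form. Your explicit accounting of the disjoint sources of randomness (the edges below $i$, the edges and seed coins above $i$, and node $i$'s own seed coin) is simply a more detailed justification of the paper's one-line remark that the realization and the seed selection are independent.
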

	
	Back to the proof of Theorem \ref{thm:adaptivity-gap-out-arborescence},
	we use $j$ to denote the minimum index that satisfies $F_{j}(0, \cdots, x_{j +1}, \cdots, x_n) > \frac{1}{2}$.
	If such index does not exist, we simply set $j =  n + 1$. Then, we have
	\begin{align}
	&F_{1}(x_1, \cdots, x_n) \notag\\
	=& \sum_{i = 1}^{j - 1}\left(F_{1}(0, \cdots, 0, x_i, \cdots, x_n) -  F_{1}(0, \cdots, 0, x_{i +1}, \cdots, x_n)\right) + F_{1}(0, \cdots, 0, x_j, \cdots, x_n)\notag\\
	=&  \sum_{i = 1}^{j - 1} x_i p_i \left(1 - F_i(0, \cdots, 0, x_{i +1}, \cdots, x_n)\right) + F_{1}(0, \cdots, 0, x_j, \cdots, x_n)\notag\\
	=&  \sum_{i = 1}^{j - 1} x_i p_i \left(1 - F_i(0, \cdots, 0, x_{i +1}, \cdots, x_n)\right) + p_{j}\cdot F_{j}(0, \cdots, 0, x_j, \cdots, x_n)\notag\\	
	\geq& \frac{1}{2}\sum_{i = 1}^{j - 1}x_i p_i + \frac{1}{2} p_j\notag\\
	\geq& \frac{1}{2}\fps_{1}(x_1, \cdots, x_n).	\label{eq:out-arborescence-eq5}
	\end{align}
The second equality comes from Lemma \ref{lem:induction-multilinear-ex} and the last inequality comes from Lemma \ref{lemma:upper-bound-adaptive}. 
\end{proof}



	\section{Adaptivity Gap for One-Directional Bipartite Graphs}
\label{sec:adaptivity-gap-bipartite}
In this section, we give an upper bound on the adaptivity gap of the influence maximization problem in the IC model with full-adoption feedback under one-directional bipartite graphs $G(L, R, E, p)$, where
$L$ and $R$ are the two set of nodes on the left side and right side respectively, and 
$E \subseteq L \times R$ are a set of edges only pointing from a left-side node to a right-side node, 
and $p$ maps each edge to a probability.
Our upper bound is tight as it matches the lower bound derived in \cite{peng2019adaptive} and it also improves the results developed in \cite{pmlr-v97-fujii19a, hatano2016adaptive}. 
The proof strategy adopted for bipartite graphs is a relative easy application of our approaches in previous sections, it again relates the multilinear extension and the optimal strategy. 

\begin{restatable}{thm}{thmAdaptivityGapUpperBipartite}
	\label{thm:adaptivity-gap-upper-bipartite}
	When the influence graph is a one-directional bipartite graph $G(L, R, E, p)$, the adaptivity gap on the influence maximization problem in the IC model with full-adoption feedback is $\frac{e}{e - 1}$.
\end{restatable}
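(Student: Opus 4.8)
The plan is to prove the bipartite adaptivity gap of $\frac{e}{e-1}$ by reusing the Poisson-process machinery developed for in-arborescences, but exploiting the crucial simplification that the bipartite structure offers: feedback is no longer correlated across seeds in a way that causes trouble. In a one-directional bipartite graph $G(L, R, E, p)$, seeds are only ever useful on the left side $L$ (right-side nodes have no out-edges, so seeding them contributes nothing beyond themselves), and the cascade from any left node $u \in L$ reaches only its immediate out-neighbors in $R$. Consequently the full-adoption feedback from seeding $u$ reveals exactly the status of edges leaving $u$, and these edge sets are disjoint across distinct left nodes. This means the feedback obtained from different seeds is genuinely \emph{independent}, so the obstruction that forced the extra factor of $2$ for arborescences simply disappears.

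\textbf{Key steps.} First I would restrict attention to policies that seed only nodes in $L$, arguing that any adaptive or non-adaptive policy can be assumed to never seed a right-side node without loss (seeding $r \in R$ only adds the singleton $\{r\}$ and never helps future decisions, since $r$ has no cascade). Second, I would set up the same Poisson process as in Section~\ref{sec:in-arborescence}, with clocks $C_i$ of rate $x_i$, and invoke Lemma~\ref{lem:possion-process-vs-multilinear-extension} to get $\E[f(\Psi(1))] \leq F(x_1, \cdots, x_n)$. Third, and this is where the bipartite structure pays off, I would establish the clean analogue of Lemma~\ref{lem:possion-process-vs-adapt-opt} \emph{without} the correction term: because the feedback from already-seeded left nodes tells us nothing about the status of edges from unseeded left nodes, conditioning on $\Psi(t) = \psi$ leaves the optimal adaptive policy's marginal behavior on the remaining nodes exactly as informative as the unconditioned problem. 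This should yield
\begin{align*}
\E\left[\frac{d f(\Psi(t))}{dt} \mid \Psi(t) = \psi\right] \geq \fps(x_1, \cdots, x_n) - f(\psi),
\end{align*}
with no $\OPT_N(G, |\dom(\psi)|)$ term, since the ``boundary'' of any reached set in a bipartite graph contributes no further cascade. Taking expectations over $\Psi(t)$ and using linearity gives the differential inequality $\frac{d}{dt}\E[f(\Psi(t))] \geq \fps(x_1, \cdots, x_n) - \E[f(\Psi(t))]$, whose solution at $t=1$ is $\E[f(\Psi(1))] \geq (1 - \frac{1}{e})\fps(x_1, \cdots, x_n)$.

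\textbf{Concluding.} Combining the two bounds with the pipage-rounding identity $\OPT_N(G,k) = \sup_{x_1 + \cdots + x_n = k} F(x_1, \cdots, x_n)$ and the definition of $\fps$, I would chain the inequalities to obtain $\OPT_N(G,k) \geq (1 - \frac{1}{e})\OPT_A(G,k)$, which is exactly the upper bound $\frac{e}{e-1}$ on the adaptivity gap. The matching lower bound is already supplied by \cite{peng2019adaptive}, so no separate construction is needed for tightness.

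\textbf{Main obstacle.} The hard part will be making the ``independent feedback'' claim fully rigorous in the marginal-gain lemma. Specifically, I must verify that conditioning on the partial realization $\psi$ accumulated by the Poisson process does not degrade the value achievable by the optimal adaptive policy constrained to the marginals $(x_1, \cdots, x_n)$ — that is, that the feedback already seen is ``useless'' for the continuation. This requires carefully checking that in the bipartite case the adaptive submodularity of Proposition~\ref{prop:infAdaptiveSubmodular} can be sharpened: the marginal gain of any unseeded left node $u$ conditioned on $\psi$ equals its unconditional marginal gain, precisely because $u$'s out-edges are disjoint from everything recorded in $\psi$. Once this decoupling is established, the rest of the argument is a direct and much cleaner rerun of the in-arborescence proof with the boundary term identically zero.
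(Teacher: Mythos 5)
Your route is genuinely different from the paper's --- the paper does not use the Poisson process for bipartite graphs at all --- but it contains a real gap: the inequality you pose in your third step,
\begin{align*}
\E\left[\frac{d f(\Psi(t))}{dt} \,\Big|\, \Psi(t) = \psi\right] \;\geq\; \fps(x_1, \cdots, x_n) - f(\psi),
\end{align*}
is false on bipartite graphs. Concretely, take one left node $\ell$ with edges of probability $1/2$ to right nodes $r_1, \cdots, r_m$, and the configuration $x_\ell = 1$, $x_{r_j} = 0$ for all $j$ (so only left nodes are ever seeded, as you require). Then $\fps(x_1, \cdots, x_n) = 1 + m/2$. Condition on the positive-probability partial realization $\psi$ in which $\ell$ has been seeded and all $m$ of its out-edges were observed to be blocked: then $\Gamma(\psi) = \{\ell\}$, $f(\psi) = 1$, and the left-hand side equals $\sum_{i \notin \Gamma(\psi)} x_i \Delta_{f}(i | \psi) = 0$, because every node with $x_i > 0$ already lies in $\Gamma(\psi)$; your claimed right-hand side equals $m/2 > 0$. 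The flawed step is the assertion that in a bipartite graph the boundary of the reached set ``contributes no further cascade.'' The term that correctly appears in Lemma~\ref{lem:possion-process-vs-adapt-opt} is $\fa(\Gamma(\psi))$, not $f(\psi)$: the policy attaining $\fps$ is credited with the \emph{expected} cascade of the seeds in $\dom(\psi)$ under fresh edge randomness, whereas the process conditioned on $\psi$ has already realized a possibly unlucky cascade. Independence of feedback across left seeds does not remove this mismatch --- it is exactly the re-randomization of blocked edges that makes $\fa(\Gamma(\psi)) - f(\psi)$ equal to $m/2$ in the example. A bound of your form can only hold after averaging over $\Psi(t)$, not realization by realization, so the derivation of your differential inequality collapses.

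The paper's actual proof sidesteps this entirely by being non-dynamic and per-node. For each node $u$, writing $p_i$ for the probability that $i$ reaches $u$, one-directional bipartiteness gives the closed form $F_u(x_1, \cdots, x_n) = 1 - \prod_{i}(1 - p_i x_i)$; decomposing the event that $u$ becomes active according to which seed triggers it gives $\fps_u(x_1, \cdots, x_n) \le \min\{1, \sum_i x_i p_i\}$; and the elementary inequality $1 - \prod_i(1 - y_i) \ge \left(1 - \frac{1}{e}\right)\min\{1, \sum_i y_i\}$ then yields $F_u \ge \left(1 - \frac{1}{e}\right)\fps_u$. Summing over $u$ and invoking pipage rounding, exactly as in the out-arborescence proof, gives the theorem (the matching lower bound is cited from prior work, as you note). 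If you wanted to salvage your dynamic argument, you would need to establish the averaged inequality $\frac{d}{dt}\E[f(\Psi(t))] \ge \fps(x_1, \cdots, x_n) - \E[f(\Psi(t))]$, but the natural way to verify it is precisely this per-node product computation --- at which point the Poisson process is superfluous.
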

\vspace{-2mm}

\section{Lower Bounds on the Adaptivity Gap}
\label{sec:adaptivity-gap-lower}
In this section, we give an example showing that the adaptivity gap is no less than $e/(e-1)$ in the full-adoption feedback model, even when the influence graph is a directed line, a special case of both the in-arborescence and the out-arborescence.

\begin{thm}
\label{theorem:adaptivity-gap-lower-bound}
The adaptivity gap for the IM problem in the IC model with full-adoption feedback is at least $e/(e-1)$, even when the influence graph is a directed line.
\end{thm}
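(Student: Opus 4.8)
The plan is to exhibit an explicit family of directed-line instances whose adaptivity gap tends to $\frac{e}{e-1}$. Fix a budget $k$, set $p = 1 - 1/k$ (so the failure probability is $q := 1-p = 1/k$), and take the line $v_1 \to v_2 \to \cdots \to v_n$ with $n = k^2$ nodes, every edge $(v_i,v_{i+1})$ being live independently with probability $p$. The heuristic behind this choice is that a single seed activates an expected $1/q$ consecutive nodes, so $k$ seeds can reach roughly $k/q = n$ nodes, exactly the length of the line. This is the \emph{critical} scaling: a much longer line cannot be filled by either policy (which collapses the gap), while a much shorter one is filled by both.

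For the adaptive lower bound, I would analyze the natural policy that seeds $v_1$, observes through full-adoption feedback the node $v_j$ where the cascade dies, then seeds $v_{j+1}$ and repeats. Because each fresh seed starts exactly one node past the previous dead end, the $k$ cascades are disjoint and their total length is $S_k = \sum_{i=1}^k (1 + G_i)$, where the $G_i$ are i.i.d.\ geometric (the number of consecutive live forward edges before the first blocked one); the activated set then has size $\min(S_k, n)$. Since $\E[1+G_i] = 1/q$ we get $\E[S_k] = k/q = n$, and a variance estimate $\Var(S_k) = O(k/q^2)$ together with concentration yields $\OPT_{A}(G,k) \ge \E[\min(S_k,n)] \ge n\bigl(1 - O(1/\sqrt{k})\bigr)$.

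For the non-adaptive upper bound, observe that under any fixed seed set the node $v_m$ is activated if and only if every edge from its nearest preceding seed up to $v_m$ is live. Grouping nodes by their nearest preceding seed, a placement inducing consecutive gaps $g_1,\dots,g_k$ (with $\sum_i g_i \le n$) has expected coverage $\frac{1}{q}\bigl(k - \sum_i p^{g_i}\bigr)$. Since $g \mapsto p^{g}$ is convex, Jensen's inequality gives $\sum_i p^{g_i} \ge k\,p^{n/k}$, hence $\OPT_{N}(G,k) \le \frac{k}{q}\bigl(1 - p^{n/k}\bigr) = k^2\bigl(1 - (1-1/k)^k\bigr)$. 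Combining the two estimates, $\frac{\OPT_{A}(G,k)}{\OPT_{N}(G,k)} \ge \frac{1 - O(1/\sqrt{k})}{1 - (1-1/k)^k}$, and letting $k \to \infty$ with $(1-1/k)^k \to 1/e$ drives the ratio to $\frac{1}{1 - 1/e} = \frac{e}{e-1}$.

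The main obstacle is the non-adaptive upper bound: one must rule out clever non-uniform seed placements, which is exactly what the convexity/Jensen step accomplishes, and one must confirm that the critical scaling $n \approx k/q$ is genuinely the worst case (both a longer and a shorter line degrade the ratio toward $1$). A secondary technical point is the concentration argument controlling $\E[\min(S_k,n)]$ against the cap $n$, which is precisely why the bound is achieved in the limit $k \to \infty$ rather than by a single finite instance.
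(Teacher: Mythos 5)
Your proposal is correct, and at the top level it is the same construction and argument as the paper's: the paper uses a directed line of $kt$ nodes with edge probability $1-1/t$ (you couple $t=k$, which is immaterial), the same adaptive ``frontier'' policy analyzed as a capped sum of i.i.d.\ geometric cascade lengths with a variance-based concentration bound (the paper's Claim 5.2 uses Chebyshev), and the same limiting ratio $\frac{1}{1-(1-1/t)^t}\to\frac{e}{e-1}$. The one step where you genuinely diverge is the non-adaptive upper bound. The paper (Claim 5.3) groups nodes by their \emph{distance} to the nearest preceding seed: the class $N_i$ of nodes at distance $i$ satisfies $|N_i|\le k$, the classes partition the line, and since the activation weights $(1-1/t)^i$ decrease in $i$, a rearrangement argument shows $\sigma(S)\le \sum_{i=0}^{t-1}(1-1/t)^i k$, i.e., even spacing is optimal. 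You instead group nodes by their nearest preceding \emph{seed}, write the spread as $\frac{1}{q}\bigl(k-\sum_i p^{g_i}\bigr)$ in terms of the gaps $g_i$ with $\sum_i g_i\le n$, and apply Jensen's inequality to the convex, decreasing map $g\mapsto p^{g}$ to get $\sum_i p^{g_i}\ge k\,p^{n/k}$. Both arguments rule out clever non-uniform placements and give the identical closed form $k^2\bigl(1-(1-1/k)^k\bigr)$; yours is a one-line convexity argument, while the paper's is a combinatorial counting/rearrangement argument, so they are interchangeable and each is a clean alternative proof of the other's claim. One small quantitative remark: your stated rate $1-O(1/\sqrt{k})$ for the adaptive side does not follow from a plain Chebyshev threshold argument (which, as in the paper, gives roughly $1-O(k^{-1/3})$), but it does follow from $\E[(n-S_k)^{+}]\le \E\left|S_k-\E S_k\right|\le \sqrt{\Var(S_k)}=O(k^{3/2})$; in any case the exact rate is irrelevant to the limit, so this does not affect correctness.
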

\begin{proof}
Consider the following influence graph $G(V, E, p)$: the graph is a directed line with vertex $v_{11}, \cdots, v_{1t}, v_{21}, \cdots, v_{2t}, \cdots,v_{k1}, \cdots, v_{kt}$, and each edge is live with probability $1 - 1/t$. Moreover, we have a budget $k$. Combining the following two claims, we can conclude that the adaptivity gap is greater than $e/(e-1)$.
\begin{restatable}{clm}{clmDirectLineAdaptive}
	\label{clm:full-adap-opt-exp}
	For any $\epsilon > 0$, if $k \geq 8/\epsilon^3$, we have $\E[\OPT_A(G, k)] \geq (1 - \epsilon)kt$. 
\end{restatable}
\vspace{-2mm}
\begin{restatable}{clm}{clmDirectLineNonAdaptive}
	\label{clm:full-nonadap-opt-exp}
	The optimal non-adaptive strategy is to select $v_{11}, \cdots, v_{k1}$ as seeds. Thus, we have $\E[\OPT_{N}(G, k)] = (1 - (1 -1/t)^t)kt$.\qedhere
\end{restatable}
\end{proof}
\vspace{-2mm}
\noindent{\bf Discussion on Existing Approaches.\ \ }
There are two types of strategies for proving upper bounds on adaptivity gaps. 
One common strategy is to convert any adaptive strategy to the multilinear extension as in \cite{seeman2013adaptive, asadpour2015maximizing} and our paper. 
The other is to convert the adaptive strategy to the {\em random walk non-adaptive strategy}~\cite{gupta2016algorithms,gupta2017adaptivity,bradac2019near}. 
Here we claim that using the instance constructed in Theorem~\ref{theorem:adaptivity-gap-lower-bound}, we can show that these two strategies can not yield better-than-$2$ upper bounds on the adaptivity gap. We defer the detailed discussions to the appendix.

\vspace{-3mm}
\section{Conclusion}
\label{sec:conclusion}
\vspace{-2mm}
In this paper, we consider several families of influence graphs and give the first constant upper bounds on adaptivity gaps for them under the full-adoption feedback model. Our methods tackle with the correlations on the feedback and hopefully can be applied to other adaptive stochastic optimization problems. For future directions, there are still gaps between our lower and upper bounds for both in-arborescences and out-arborescences, so it would be interesting to close the gap. Another open question is to settle down the adaptivity gap for general influence graphs under the full-adoption feedback model.
 


	\bibliographystyle{acmsmall}
	\bibliography{bibdatabase}
	\clearpage
	\section*{Appendix}
	\appendix
	For convenience, we restate the lemmas and theorems in the appendix before the proofs.

\section{Missing Proofs from Section~\ref{sec:in-arborescence}}

\lemPoissonProcessMultilinearExtension*
\begin{proof}
	Notice that in the Poisson process, the selection of seeds are actually independent of the realization of the influence graph. Moreover, seed nodes are selected independently. At the end of the process (when $t=1$), the node $i$ is selected as a seed with probability $1 - e^{-x_i}$. Thus we have
	\begin{align}
	\E\left[f\left(\Psi(1)\right)\right] &=\sum_{S \subseteq V}\left[ \left(\prod_{i \in S}(1 - e^{-x_i})\right)\prod_{i \notin S}\left(e^{-x_i}\right) \fa(S) \right]\notag\\
	&= F(1 - e^{-x_1}, \cdots, 1 - e^{-x_n}) \leq F(x_1, \cdots, x_n)\label{eq:poission-vs-multilinear-extension}.
	\end{align}
	The inequality holds due to the monotonicity of the multilinear extension $F(\cdot)$ and the fact that $1 - e^{-x} \leq x$.
\end{proof}

\lemPoissonProcessAdaptiveOptimal*

\begin{proof}
	First, we consider the left-hand side of Eq.~\eqref{eq:in-arborescence-eq1}. For any $t \in [0, 1], i \in [n]$ and small enough amount of time $dt$, the clock $C_i$ sends out signals with probability $x_i dt$ during the time interval $[t, t + dt]$. Since signals are sent out independently, the probability that more than one clock send out signals simultaneously in time interval $[t, t + dt]$ is of $O((dt)^2)$, which can be consider negligible comparing to $dt$. Thus we have
	\begin{align}
	\label{eq:in-arborescence-eq2}
	\E\left[f\left(\Psi(t + dt)\right) - f\left(\Psi(t)\right) | \Psi(t) = \psi\right] = \sum_{i \notin \dom(\psi)}x_i dt \cdot \Delta_{f}(i | \psi),
	\end{align}
	Rewriting the above equation, we derive that 
	\begin{align}
	\label{eq:in-arborescence-eq4}
	\E\left[\frac{df\left(\Psi(t)\right)}{dt} | \Psi(t) = \psi\right] =\sum_{i \notin \dom(\psi)}x_i \Delta_{f}(i | \psi) =\sum_{i \notin \Gamma(\psi)}x_i \Delta_{f}(i | \psi).
	\end{align}
	The second equality holds because $\Delta_{f}(i | \psi) = 0$ for any node $i \in \Gamma(\psi)$ in the full-adoption feedback model.
	Next, we consider the right-hand side of Eq.~\eqref{eq:in-arborescence-eq1}. We write $\bx = (x_1, \cdots, x_n)$ and use the indicator vector $\bI_{S} \in \{0, 1\}^{n}$ to denote an $n$-dimensional 0-1 vector, such that the coordinate $i$ is 1 if and only if $i \in S$. By the monotonicity of the function $\fps(\cdot)$, we have
	\begin{align}
	\label{eq:in-arborescence-eq3}
	\fps\left(x_1, \cdots, x_n\right) &\leq \fps\left(\bx \vee \bI_{\Gamma(\psi)}\right).
	\end{align} 
	Consider the optimal adaptive policy $\pi^+$ of $\fps(\bx \vee \bI_{\Gamma(\psi)})$
	as defined in Definition~\ref{definition:optimal-adaptive-strategy}.
	We can assume $\pi^+$ selects nodes in $\Gamma(\psi)$ at the beginning since they will eventually appear in the seed set regardless of the realization of the live-edge graph. 
	For $i \notin \Gamma(\psi)$, $\pi^+$ would select node $i$ as a seed with probability $x_i$,
	according to Definition~\ref{definition:optimal-adaptive-strategy}.
	Let $\Psi_i$ denote the partial realization when $\pi^+$ selects node $i$.
	Conditioned on $\Psi_i$, the selection of $i$ provides a marginal gain of $\Delta_{f}(i | \Psi_i)$
	for the influence spread.
	When we take its expectation over $\Psi_i$ and then multiply it with $x_i$, we obtain the
	overall marginal gain of selecting $i$ as a seed in policy $\pi^+$.
	When summing over all $i \notin \Gamma(\psi)$, together with the non-adaptive influence spread of
	seed nodes in $\Gamma(\psi)$, we thus obtain:
	\begin{align}
	\label{eq:in-arborescence-eq15}
	\fps\left(\bx \vee \bI_{\Gamma(\psi)}\right) = \sum_{i \notin \Gamma(\psi)}x_{i}\cdot \E_{\Psi_i}[\Delta_{f}(i | \Psi_i)] + \fa\left(\Gamma(\psi)\right).
	\end{align}
	Combining Eq.~\eqref{eq:in-arborescence-eq4}, \eqref{eq:in-arborescence-eq3}, \eqref{eq:in-arborescence-eq15}, it suffices to prove 
	\begin{align}
	\label{eq:in-arborescence-eq16}
	\Delta_{f}(i | \Psi_i) \leq \Delta_{f}(i | \psi)
	\end{align}
	for any $i \notin \Gamma(\psi)$ and any partial realization $\Psi_i$ such that $\psi \subseteq \Psi_i$. 
	This is exactly the adaptive submodularity of the influence reach function under the IC model
	with full-adoption feedback, which is given in Proposition~\ref{prop:infAdaptiveSubmodular}.
	Thus, the lemma holds.
\end{proof}

\lemInArborescenceBoundary*

\begin{proof}
	Consider any partial realization $\psi$ and any node $v \in \dom(\psi)$. 
	Take the unique directed path from node $v$ to the root $u$, let $\bar{v}$ denote the node on the path which is (i) contained in $\Gamma(\psi)$ and (ii) closest to the root $u$. Then we set $S = \{\bar{v}:v\in \dom(\psi)\}$. Clearly there is no directed edge from $\Gamma(\psi) \backslash S$ to $V\backslash \Gamma(\psi)$ and we have $|\partial(\psi)| \leq |S| \leq |\dom(\psi)|$.
\end{proof}

\lemInfluenceSpreadTwoHop*

\begin{proof}
	Fix any realization $\phi \sim \psi$, and then consider any node $v$ in $\Gamma(\Gamma(\psi), \phi) \backslash \Gamma(\partial(\psi), \phi)$.
	There must exist a directed path $P$ from $\Gamma(\psi) \backslash \partial(\psi)$ to $v$, and the path $P$ does not contain any nodes in $\partial(\psi)$. According to the definition of the boundary set $\partial(\psi)$, there is no directed path from $\Gamma(\psi) \backslash \partial(\psi)$ and to $V\backslash \Gamma(\psi)$ , unless it goes through a node in $\partial(\psi)$. Thus we conclude that $v \in \Gamma(\psi) \backslash \partial(\psi)$ and this gives proof for Eq.~\eqref{eq:in-arborescence-eq5}. 
	With Lemma \ref{lem:in-arborescence-boundary}, we have
	$\sigma(\partial(\psi)) \le \OPT_N(G, |\dom(\psi)|)$.
	Therefore, Inequality~\eqref{eq:in-arborescence-eq6} holds.
\end{proof}

\lemNearConcaveOpt*
\begin{proof}
	Let $\G_{N}(G, k)$ denote the non-adaptive greedy solution that select $k$ seed nodes.
	For $X \in \left\{0, 1, \cdots, n\right\}$, the greedy solution is $1 - 1/e$ approximate to the optimal solution, i.e.,
	\begin{align}
	\label{eq:in-arborescence-eq8}
	\OPT_{N}(G, X) \leq \frac{e}{e - 1}\G_{N}(G, X).
	\end{align} 
	We note that the greedy solution $\G_N(G, X)$ is concave in $X$, due to the submodularity of the influence spread function. Then taking expectation over both sides of Eq.~\eqref{eq:in-arborescence-eq8}, by Jensen's inequality, we have
	\begin{align}
	\E\left[ \OPT_N(G, X) \right] &\leq \frac{e}{e - 1} \E[\G_N(G, X)] \leq \frac{e}{e - 1} \G_N\left(G, \E[X]\right) \notag\\
	&\leq\frac{e}{e - 1}\OPT_{N}\left(G, \E[X]\right) = \frac{e}{e - 1}\OPT_N(G, k).	\label{eq:in-arborescence-eq9}
	\end{align}
	This concludes the proof.
\end{proof}

\section{Missing Proofs from Section~\ref{sec:adaptivity-gap-out-arborescence}}
\lemUpperAdaptiveOptimal*
\begin{proof}
	Let $\pi$ be any adaptive strategy satisfies $ \Pr_{\Phi\sim\cP}\left[i \in V(\pi, \Phi)\right] = x_i, i \in [n]$. Let $\Evt_{i}$ denote the event that node $1$ becomes active right after $\pi$ chooses node $i$. Furthermore, we use $\Evt_{i:}$ to denote the event that node $1$ become active right after $\pi$ chooses a node from $\{i, i + 1, \cdots, n\}$. We notice that events $\Evt_{1}, \cdots, \Evt_{n}$ are disjoint and we have 
	\begin{align}
	\label{eq:out-arborescence-eq2}
	\fps_{1}(x_1, \cdots, x_n) = \sum_{j = 1}^{n}\Pr\left[\Evt_j\right] = \sum_{j = 1}^{i}\Pr\left[\Evt_j\right] + \Pr\left[\Evt_{i+1:}\right] \, \forall i.
	\end{align}
	It is easy to see that
	\begin{align}
	\label{eq:out-arborescence-eq3}
	\Pr[\Evt_{i+1:}] \leq p_{i + 1},
	\end{align} 
	since the event $\Evt_{i+1:}$ can only happen when the node $i + 1$ can reach node $1$. Moreover, let $\mF_i$ denote the event that the policy $\pi$ selects the node $i$ before any nodes in $\{1, \cdots, i\}$ are active. Then we have for any $j \in [n]$,
	\begin{align}
	\label{eq:out-arborescence-eq4} 	
	\Pr\left[\Evt_{j}\right] 
	=	\Pr_{\Phi}\left[\Evt_{j} | \mF_j\right] \cdot \Pr_{\Phi}\left[ \mF_j\right] 
	\leq \Pr_{\Phi}\left[\Evt_{j} | \mF_j\right]\cdot \Pr_{\Phi}\left[j \in V(\pi, \Phi)\right]
	= p_j \cdot x_j.
	\end{align}
	The first equality holds since the event $\Evt_j$ can only happen when $\pi$ selects the node $j$ before any nodes $\{1, \cdots, j\}$ are active. Combining Eq.~\eqref{eq:out-arborescence-eq2}~\eqref{eq:out-arborescence-eq3}~\eqref{eq:out-arborescence-eq4}, we complete the proof.
\end{proof}

\lemMultilinearExtensionTelescope*
\begin{proof}
	Since the node $1$'s predecessors form a directed line, for any $i$ we have
	\begin{align*}
	&F_1(0, \cdots, 0, x_i, \cdots, x_n) -  F_1(0, \cdots, 0, x_{i +1}, \cdots, x_n) \notag\\
	=&p_i \cdot \left(F_i(0, \cdots, 0, x_i, \cdots, x_n) -  F_i(0, \cdots, 0, x_{i +1}, \cdots, x_n)\right) \notag \\
	=&p_i\cdot \left(1 - \left(1 - x_i\right)\left(1 - F_i(0, \cdots, 0, x_{i +1}, \cdots, x_n)\right) - F_i(0, \cdots, 0, x_{i +1}, \cdots, x_n) \right)\notag\\
	=&p_i x_i \left(1 - F_i(0, \cdots, 0, x_{i +1}, \cdots, x_n)\right).
	\end{align*}
	The first two equalities hold because the realization and the selection of nodes are independent.
\end{proof}

\section{Missing Proof from Section~\ref{sec:adaptivity-gap-bipartite}}

\thmAdaptivityGapUpperBipartite*

\begin{proof}
	For each node $u$, it suffices to prove that for any configuration $(x_1, \cdots, x_n)$, 
	\begin{align}
	\label{eq:adaptivity-bipartite-upper-eq1}
	F_{u}(x_1, \cdots, x_n) \geq \left(1 - \frac{1}{e}\right)\fps_{u}(x_1, \cdots, x_n),
	\end{align}
	where $F_u$ and $\fps_u$ are the same as defined in the proof of Theorem~\ref{thm:adaptivity-gap-out-arborescence}.
	We use $p_i$ to denote the probability that node $i$ can reach node $u$, then we have
	\begin{align}
	\label{eq:adaptivity-bipartite-upper-eq2}
	F_u(x_1, \cdots, x_n) = 1 - \prod_{i = 1}^{n}(1 - p_i x_i).
	\end{align}
	On the other side, let $\Evt_{i}$ denote the event that node $u$ become active right after the optimal policy $\pi^+$ chooses node $i$. We know that $\Pr[\Evt_i] \leq x_i \cdot p_i$ and thus we can conclude that 
	\begin{align}
	\label{eq:adaptivity-bipartite-upper-eq3}
	\fps_u(x_1, \cdots, x_n) = \sum_{i = 1}^{n}\Pr\left[\Evt_i\right] \leq \sum_{i = 1}^{n}x_i p_i.
	\end{align}
	Combining Eq.~\eqref{eq:adaptivity-bipartite-upper-eq2}~\eqref{eq:adaptivity-bipartite-upper-eq3} and the fact that 
	\begin{align}
	\label{eq:adaptivity-bipartite-upper-eq4}
	1 - \prod_{i = 1}^{n}(1 - y_i) \geq \left(1 - \frac{1}{e}\right)\min\{1, \sum_{i = 1}^{n}y_n\}
	\end{align}
	holds for all $y_{i}\in [0, 1]$, we can prove Eq.~\eqref{eq:adaptivity-bipartite-upper-eq1} and conclude the proof.
\end{proof}

\section{Missing Proofs and Further Discussions from Section~\ref{sec:adaptivity-gap-lower}}

\clmDirectLineAdaptive*

\begin{proof}
	Consider the following adaptive policy $\pi$: $\pi$ always selects the inactive node that is closest to the origin of the directed line, until it reaches the budget. Let $X_i$ ($i \in [k]$) denote the number of nodes that can be reached from the $i^{th}$ seed and let $X = X_1 + \cdots + X_k$. It is easy to see that $\E[\OPT_{A}(G, k)] \geq \fa(\pi) =\E[X]$. Let $Y_i \sim GE(1 - 1/t)$, i.e., $Y_i$ is a geometry random variable parametrized with $1 - 1/t$. $Y_1, \cdots, Y_k$ are independent and we know that $\E[Y_i]$ = t and $\Var[Y_i] = t^2 - t$. Our key observation is that $\E[X] = \E[\min\{Y_1 + \cdots + Y_k, kt\}]$. By Chebshev bounds, we have
	\begin{align}
	\label{eq:adaptivity-lower-eq1}
	\Pr\left[Y_1 + \cdots + Y_k < (1 - \epsilon/2)kt\right] \leq  \frac{4k(t^2 - t)}{\epsilon^2 k^2 t^2} \leq \frac{4}{\epsilon^2k} \leq \epsilon/2.
	\end{align}
	Thus we know that 
	\begin{align}
	\E\left[\min\{Y_1 + \cdots + Y_k, kt\}\right] &\geq \Pr\left[Y_1 + \cdots +  Y_k \geq (1- \epsilon/2)kt\right]\cdot \left(1 - \epsilon/2\right)kt\notag\\
	&\geq \left(1 - \epsilon/2\right)\cdot\left(1 - \epsilon/2\right)kt \geq \left(1 - \epsilon\right)kt\label{eq:adaptivity-lower-eq2}.
	\end{align}
	This concludes the proof.
\end{proof}

\clmDirectLineNonAdaptive*

\begin{proof}
	In the non-adaptive setting, for any node $u$ and seed set $S$, we define the distance between the node $u$ and the set $S$ as the distance between $u$ and the closest predecessor of $u$ in $S$. We know that the probability that the node $u$ is active only depends on the distance between $u$ and $S$. Let $N_i$ ($i \geq 0$) denote the set of nodes that has distance $i$ with $S$. Then we know that (i) nodes in $N_i$ are active with probability $(1 - 1/t)^{i}$, (ii) $N_0, N_1, \cdots N_{kt-1}$ are disjoint and $|N_i| \leq k$. 
	Now we have that $\sigma(S) = \sum_{i = 0}^{kt-1}(1 - 1/t)^{i}\cdot |N_i| \le \sum_{i = 0}^{t - 1}(1 - 1/t)^{i}\cdot k$.
	Thus, we can conclude that the optimal non-adaptive solution is to select $v_{11}, \cdots, v_{k1}$ as seeds and $\E[\OPT_{N}(G, k)] = \sum_{i = 0}^{t - 1}(1 - 1/t)^{i}\cdot k = (1 - (1 -1/t)^t)kt$.
\end{proof}

\noindent{\bf Discussion on Existing Approaches.\ \ }
In this paragraph, we give a hard instance showing that existing approaches cannot yield better-than-$2$ upper bounds on the adaptivity gap. The hard instance is exactly the directed line constructed in Theorem \ref{theorem:adaptivity-gap-lower-bound}, i.e., a directed line of length $kt$ and each edge is live with probability $1 - 1/t$. We use node $i$ to denote the $(i - 1)^{th}$ successor of the origin of the directed line, notice that the origin itself is denoted as node 1.

\noindent{\bf Multilinear Extension.\ \ } One common strategy is to use the multilinear extension as in \cite{seeman2013adaptive, asadpour2015maximizing}. In \cite{asadpour2015maximizing}, they consider the {\em stochastic submodular optimization} problem and prove that $\fps(x_1, \cdots, x_n) \leq \frac{e}{e - 1} F(x_1, \cdots, x_n)$ holds for any configuration $(x_1, \cdots, x_n)$.   We show that the ratio of $\fps(x_1, \cdots, x_n) / F(x_1, \cdots, x_n)$ can approach to 2 in our example. To be more specific, consider the configuration $(1, 1/t, \cdots, 1/t)$, we claim that $\fps(1, 1/t, \cdots, 1/t) = kt$. 
Consider the adaptive policy $\pi$ that always selects the inactive node that is closest to the origin of the directed line. The policy $\pi$ will select the first node with probability 1 and other nodes with probability $1/t$, since it will seed a node if and only if its incoming edge is blocked, this can happen with probability $1/t$. On the other side, we have $F(1, 1/t, \cdots, 1/t) \leq  F(1 - 1/t, 0, \cdots, 0) + F(1/t, \cdots, 1/t) \leq (1 - 1/t) t + F(1/t, \cdots, 1/t) \leq t + \frac{1}{2} kt + k$. The first inequality holds because of the DR-submodularity of the multilinear extension
and the third one holds because every node $u$ in the line is active with probability
\begin{align}
&\sum_{i = 0}^{\infty}\Pr\left[u \text{ is activated by its } i^{th} \text{ predecessor}\right]\notag\\ 
&\leq \sum_{i = 0}^{\infty} \frac{1}{t}\cdot\left(1 - \frac{1}{t}\right)^{i}\cdot\left(1 - \frac{1}{t}\right)^{i} = \frac{1}{t}\cdot \frac{1}{1 - (1 - 1/t)^2} = \frac{t}{2t - 1}.    \label{eq:adaptivity-lower-eq3}
\end{align}  
We conclude that when $t, k \rightarrow \infty$, $\fps(1, 1/t, \cdots, 1/t) / F(1, 1/t, \cdots, 1/t) \rightarrow 2$.

\noindent{\bf Random Walk Non-adaptive Strategy.\ \ } In \cite{gupta2016algorithms,gupta2017adaptivity,bradac2019near}, the authors consider the {\em adaptive stochastic probing} problem and they convert an adaptive policy to a non-adaptive policy by sampling a random leaf of the decision tree of the adaptive policy. 
Using our hard instance in the previous paragraph, we can show that this approach (i.e., random walk non-adaptive strategy) can give an upper bound of at most 2. 
To be more specific, we again consider the adaptive strategy $\pi$ and its corresponding non-adaptive strategy $\cW(\pi)$, where $\cW(\pi)$ picks a random leaf of the decision tree of the policy $\pi$. 
We are going to show that $\fps(1, 1/t, \cdots, 1/t)/\sigma(\cW(\pi))$ approaches to 2 asymptotically and it is sufficient to show that $\sigma(\cW(\pi)) \le t + k + \frac{1}{2}kt$. 
We imagine that node 1 appears in $\cW(\pi)$ with probability $1/t$ instead of 1, this is for ease of analysis and it will decrease the influence spread for at most $(1-1/t)\cdot t$ due to the submodularity of the influence spread function. 
For any node $u$, $u$ is activated by its $i^{th}$ predecessor (if it has one) when 
(i) the random seed set $\cW(\pi)$ does not contain nodes between $u$ and its $i^{th}$ predecessor (this happens with probability $\left(1 - \frac{1}{t}\right)^{i}$), 
(ii) its $i^{th}$ predecessor is included in the seed set (this happens with probability $\frac{1}{t}$) and 
(iii) node $u$ can be reached from its $i^{th}$ predecessor (this happens with probability $\left(1 - \frac{1}{t}\right)^{i}$). 
Moreover, we know that the above three events are independent in the non-adaptive setting, thus the probability that node $u$ is activated by the $i^{th}$ predecessor is $\frac{1}{t}\cdot\left(1 - \frac{1}{t}\right)^{i}\cdot\left(1 - \frac{1}{t}\right)^{i}$ and the probability that it is active is no more than $\frac{t}{2t - 1}$. This concludes our argument.

\end{document}